\newtheorem{thm}{Theorem}
\newtheorem{cor}{Corollary}
\newtheorem{lem}{Lemma}
\newtheorem{proof}{proof}
\newtheorem{defn}{Definition}
\newtheorem{rem}{Remark}
\begin{document}

\newcommand\qbin[3]{\left[\begin{matrix}#1\\#2\end{matrix}\right]_{#3}}

\setcounter{page}{1}

\title{Some New Constructions of Coded Caching Schemes with Reduced
Subpacketization}

\author{Wentu Song, ~ Kui Cai, ~ and  ~ Long Shi
\thanks{The authors are with Singapore University of
Technology and Design, Singapore
       (e-mails: \{wentu$\_$song, cai$\_$kui, shi$\_$long\}@sutd.edu.sg).}
}\maketitle

\begin{abstract}

We study the problem of constructing centralized coded caching
schemes with low subpacketization level based on the
placement delivery array (PDA) design framework. PDA design 
is an efficient way to construct centralized coded caching schemes
and most existing schemes, including the famous Maddah-Ali-Niesen
scheme, can be described using PDA. In this paper, we first prove
that constructing a PDA is equivalent to constructing three binary
matrices that satisfy certain conditions. From this perspective,
we then propose some new constructions of coded caching schemes
using PDA design based on projective geometries over finite
fields, combinatorial configurations, and $t$-designs,
respectively. Our constructions achieve low subpacketization level
(e.g., linear subpacketization) with reasonable rate loss and
include several known results as special cases. Finally, we give
an approach to construct new coded caching scheme from existing
schemes based on direct product of PDAs. Our results enrich the
coded caching schemes of low subpacketization level.
\end{abstract}

\begin{IEEEkeywords}
Centralized coded caching, subpacketization, placement delivery
array, direct product.
\end{IEEEkeywords}

\section{Introduction}

Caching is a popular technique to reduce peak traffic rate for
large scalar content delivery over networks. The basic idea is to
allocate fractions of the contents to user memories during the
off-peak traffic times, which can be used to help to serve user
requests during the peak hours and hence reduce network traffic.
Traditional duplication-based caching schemes focused on
exploiting the popularity or statistics of the user demands
$($e.g., see \cite{Meyerson01}$)$. Coded caching was first
proposed in \cite{Maddah-Ali14}, and was shown to be able to
significantly reduce network traffic.

The caching system considered in \cite{Maddah-Ali14} consists of
one server connected by $K$ users through a shared, error-free
link. The server has a database of $N$ files of equal size, and at
certain times each user may demand a specific file from the
server. It is assumed that each user has a cache that allows it to
store $M/N$ fraction of all the files in the server. A
\emph{centralized coded caching scheme} works in two separated
phases: the \emph{placement phase} and the \emph{delivery phase}.
In the placement phase, the server, without any knowledge of the
user demands, allocates certain uncoded packets of the data files
into the cache of users, while in the delivery phase, the server,
upon receiving the specific demands of all users, broadcasts coded
packets through the shared link to all users so that each user can
extract his desired file from the received packets and the cache
content. The main task is to design caching schemes (placement
scheme and corresponding delivery scheme) that minimize the rate
$R$, which is defined as the maximal transmission amount in the
delivery phase among all possible combinations of user demands.
The coded caching scheme proposed in \cite{Maddah-Ali14} attains a
rate of
\begin{align*}
R^*=\frac{K\left(1-\frac{M}{N}\right)}
{1+K\frac{M}{N}},\end{align*} where $1-\frac{M}{N}$ is called the
local caching gain and $1+K\frac{M}{N}$ is called the global
caching gain. Moreover, this rate $R^*$ was proved to be optimal
for uncoded placement \cite{Wan16,Yu17}.

Since the original work of \cite{Maddah-Ali14}, coded caching has
attracted significant attention and many works have been done from
various aspects of the problem. It was shown in \cite{Tian18} that
caching with coded placement can achieve improvement in
memory-rate tradeoff in some regime. Caching where files have
different popularity scores was studied in \cite{Niesen17} and
caching where files have differing sizes was studied in
\cite{Zhang15}. The case of users with heterogenous cache sizes
was investigated in \cite{Wang15} and the case of each user
demanding multiple files was investigated in \cite{Ji15}. The
scenario of decentralized caching was considered in
\cite{Maddah-Ali14-2}, where in the placement phase the cache
content of each user is randomly chosen from the file packets. A
more general case, the hierarchical network with two layers of
caches, was considered in \cite{Karamchandani14}.

\subsection{Related Work}
Although optimal in rate, the caching scheme in
\cite{Maddah-Ali14} has its limitation in practical
implementations: By this caching scheme, each file is divided into
$F=\binom{K}{KM/N}$ packets $($The number $F$ is also referred to
as the file size or subpacketization in some literature.$)$, which
grows exponentially with $K$ \cite{Shanmugam16}. For practical
application, it is important to construct coded caching scheme
with smaller packet number.

So far, several coded caching schemes with reduced file size have
been constructed, all with the sacrifice of increasing the rate.
In \cite{Shanmugam17}, a class of coded caching schemes with
linear file size $($i.e., $F=K)$ were constructed from
Ruzsa-Szem$\acute{\text{e}}$redi graphs. A very interesting
framework for constructing centralized coded caching scheme, named
\emph{placement delivery array} design (or PDA design for
simplicity), was introduced in \cite{Yan17}, and based on the PDA
design framework, some new classes of coded caching schemes were
constructed in \cite{Yan17} and \cite{Cheng19}. Following the PDA
design framework, two classes of constant rate caching schemes
with sub-exponential subpacketization were obtained in
\cite{Chong18} from $(6,3)$-free $3$-partite hypergraphs, and a
more general class of coded caching schemes were constructed in
\cite{Yan18} from strong edge colored bipartite graphs. Coded
caching schemes based on resolvable combinatorial designs from
certain linear block codes were studied in \cite{Tang18}, and
coded caching schemes based on projective geometries over finite
fields were reported in \cite{Krishnan18,Suthan19}. Coded caching
schemes from some other block designs, including balanced
incomplete block designs (BIBDs), $t$-designs and transversal
designs (TDs), are obtained in \cite{Agrawal19}. Summaries of
known centralized coded caching schemes can be found in
\cite{Cheng19,Chong18} and \cite{Krishnan18}.

\subsection{Our Contributions}
In this paper, we focus on constructing centralized coded caching
schemes with reduced subpacketization level. 
Our contributions are as follows.
\begin{itemize}
 \item[1)] We prove that constructing a PDA is equivalent to
 constructing three binary matrices that satisfy some certain
 conditions. This new perspective of PDA design provides a
 convenient way to construct feasible PDAs and the corresponding
 coded caching schemes.
 \item[2)] We propose some new constructions of coded caching schemes
 using PDA design based on projective geometries over finite fields,
 combinatorial configurations and $t$-designs, respectively.
 Our constructions have low subpacketization level (e.g., linear
 subpacketization) at the cost of reasonable rate loss and include several
 known results as special cases.
 \item[3)] We propose a method for constructing
 new coded caching scheme from existing schemes based on a direct
 product technique of PDAs. This direct product method allows us to
 construct more caching schemes from the known results.
\end{itemize}

\subsection{Organization}
The rest of this paper are organized as follows. In Section
\uppercase\expandafter{\romannumeral 2} we introduce the
centralized coded caching problem and other related concepts
including the placement delivery array (PDA) design and some basic
concepts of combinatorial designs. In Section
\uppercase\expandafter{\romannumeral 3} we present a new
perspective of PDA design. Construction of coded caching schemes
based on projective geometries over finite fields is given in
Section \uppercase\expandafter{\romannumeral 4} and constructions
of caching schemes based on configurations and $t$-designs are
given in Section \uppercase\expandafter{\romannumeral 5}. The
direct product of PDAs is given in Section
\uppercase\expandafter{\romannumeral 6}. Finally, the paper is
concluded in Section \uppercase\expandafter{\romannumeral 7}.

\section{Preliminaries}
We use $\mathbb F$ to denote a finite field. If $\mathbb F$ is a
$q$-ary field, we also denote $\mathbb F$ as $\mathbb F_q$. For
any positive integer $n$, denote $[n]:=\{1,2,\cdots,n\}$. For any
set $X$, $|X|$ is the size (cardinality) of $X$. If $|X|=n$, we
call $X$ an $n$-set; if $Y\subseteq X$ and $|Y|=m$, where $0\leq
m\leq |X|$, we call $Y$ an $m$-subset of $ X$. Let $\binom{X}{m}$
denote the set of all $m$-subsets of $X$. For any sets $ X$ and $
Y$, $ X\backslash Y$ is a set consisting of all elements of $ X$
that are not contained in $ Y$. 

An array is denoted by a bold uppercase letter (say \textbf{P},
\textbf{C}, etc), and a vector is denoted by a bold uppercase
letter or a bold lowercase letter (say \textbf{W}, \textbf{Z},
\textbf{d}, etc). If $\textbf{P}=[p_{i,j}]_{m\times n}$ is an $m$
by $n$ array, denote $\textbf{P}(i,j)=p_{i,j}$. We use
$\textbf{C}_{X, Y}$ to denote a matrix whose rows are indexed by
$X$ and columns are indexed by $Y$, where $X$ and $Y$ are two sets
(not necessarily distinct). For each $x\!\in\! X$ and $y\!\in\!
Y$, the element of $\textbf{C}_{X, Y}$ in the $x$-th row and the
$y$-th column is denoted by $\textbf{C}_{X, Y}(x,y)$. We stipulate
that matrices with different subscripts are different, even if the
corresponding subscripts denote the identical set. For example,
$\textbf{C}_{X, Y}$ and $\textbf{C}_{X, Z}$ denote two different
matrices even if $Y$ and $Z$ are identical set.

The following lemma will be used in our discussions.
\begin{lem}\cite[Corollary 5.2]{Bondy}\label{match-B-G}
If $G$ is a $k$-regular bipartite graph with $k>0$, then $G$ has a
perfect matching.
\end{lem}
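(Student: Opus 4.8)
The plan is to reduce the statement to Hall's marriage theorem. Write the bipartition of $G$ as $(X,Y)$. First I would observe that the two sides have equal size: counting the edges of $G$ from each side gives $k|X|=|E(G)|=k|Y|$, so $|X|=|Y|$ since $k>0$. Consequently any matching that saturates $X$ must also saturate $Y$, and is therefore a perfect matching. It thus suffices to produce a matching saturating $X$.

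By Hall's theorem, such a matching exists precisely when $|N(S)|\geq|S|$ for every $S\subseteq X$, where $N(S)$ denotes the set of vertices of $Y$ adjacent to some vertex of $S$. To verify this condition I would count, in two ways, the edges incident to $S$. On one hand, since every vertex of $X$ has degree $k$, there are exactly $k|S|$ such edges. On the other hand, every one of these edges has its $Y$-endpoint in $N(S)$, so they form a subset of the edges incident to $N(S)$; as each vertex of $N(S)$ has degree $k$ and $G$ is bipartite, the latter number is exactly $k|N(S)|$. Hence $k|S|\leq k|N(S)|$, and dividing by $k>0$ yields $|S|\leq|N(S)|$, which is Hall's condition.

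Applying Hall's theorem then gives a matching saturating $X$, which by the first step is a perfect matching of $G$, completing the argument. The only real content is the verification of Hall's condition, and the key idea there is the simple double count of edges between $S$ and its neighborhood; the $k$-regularity enters exactly to make both counts equal to $k$ times a cardinality, so that the inequality $|S|\leq|N(S)|$ drops out immediately. I expect no genuine obstacle here, since the regularity hypothesis makes Hall's condition automatic; the statement is essentially a textbook consequence of Hall's theorem, which is why it is quoted as a known result.
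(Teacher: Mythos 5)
Your proof is correct, and it is essentially the argument behind the result as cited: the paper quotes this lemma from Bondy and Murty without proof, and the textbook's own proof of Corollary 5.2 is exactly your route --- verify Hall's condition by double-counting the edges between $S$ and $N(S)$ using $k$-regularity, then note $|X|=|Y|$ so the matching saturating $X$ is perfect. No gaps; nothing further needed.
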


\subsection{Centralized Coded Caching Problem}


We consider the caching system where one server is connected by
$K$ users through a shared, error-free link. The server has a
library of $N$ files, denoted by
$\textbf{W}_1,\cdots,\textbf{W}_N$, such that each file
$\textbf{W}_i\in\mathbb F^F$, where $\mathbb F$ is some fixed
finite field and $F$ is a positive integer called the
subpacketization. Moreover, we assume that each user $k$ has a
local cache memory of size $M$ with $0\leq M\leq N$, that is, each
user can store a vector $\textbf{Z}_k\in\mathbb F^{MF}$ in its
local cache memory. We refer to such system as a $(K,M,N)$
\emph{caching system}.\footnote{The subpacketization $F$, also
called the file size or the packet number, is not a fixed
parameter of the caching system. In fact, it is determined by the
specific caching scheme.} In this work, it is sufficient to assume
that $\mathbb F=\mathbb F_2$.

The caching system operates in two phases: the placement phase and
the delivery phase. In the placement phase, for each user $k$, the
vector $\textbf{Z}_k$ is computed and allocated into its local
cache memory. In the delivery phase, each user $k$ demands a file
$\textbf{W}_{d_k}$, where $d_k\in[N]$, and the server, informed of
the demands of all users, computes a vector
$\textbf{X}_{\textbf{d}}\in\mathbb F^{\lfloor RF\rfloor}$ and
broadcasts it to all users through the shared link so that each
user $k$ can decode its demanded file $\textbf{W}_{d_k}$ from
$\textbf{Z}_k$ and $\textbf{X}_{\textbf{d}}$, where
$\textbf{d}=(d_1,\cdots,d_K)\in[N]^K$ is called a demand vector
and $R$ is called the rate. As such, an $F$-division \emph{caching
scheme} with rate $R$ is specified by three sets of functions:
\begin{itemize}
 \item[(i)]
 a set of caching functions
 $$\left\{\phi_k:\mathbb F^{NF}\rightarrow
 \mathbb F^{MF}\right\}_{k\in[K]},$$
 \item[(ii)]
 a set of encoding functions
 $$\left\{\varphi_{\textbf{d}}: \mathbb F^{NF}\rightarrow
 \mathbb F^{\lfloor RF\rfloor}\right\}_{\textbf{d}\in[N]^K},$$
 \item[(iii)] a set of decoding functions
 $$\left\{\mu_{k,\textbf{d}}: \mathbb F^{MF}
 \times\mathbb F^{\lfloor RF\rfloor}\rightarrow \mathbb
 F^{NF}\right\}_{k\in[K],\textbf{d}\in[N]^K},$$
\end{itemize}
such that for all $k\in[K]$ and
$\textbf{d}=(d_1,\cdots,d_K)\in[N]^K$,
$$\textbf{W}_{d_k}=\mu_{k,\textbf{d}}
(\textbf{Z}_k,\textbf{X}_{\textbf{d}}),$$ where
$\textbf{Z}_k=\phi_k(\textbf{W}_1,\cdots,\textbf{W}_N)$ and
$\textbf{X}_{\textbf{d}}=\varphi_{\textbf{d}}
(\textbf{W}_1,\cdots,\textbf{W}_N)$. A caching scheme is said to
have uncoded placement if $\textbf{Z}_k$ consists of an exact copy
of some subpackets of $\textbf{W}_1,\cdots,\textbf{W}_N$.
Otherwise, it is said to have coded placement.

\subsection{Placement Delivery Array}

Placement delivery array (PDA) was introduced in \cite{Yan17} and
was shown to be an efficient framework for constructing coded
caching schemes. In this subsection, we briefly review the basic
idea of the PDA framework.

\begin{defn}[PDA]\label{defn-pda}
Let $K,F,Q,S$ be positive integers such that $Q<F$, and
$\textbf{P}=[p_{j,k}]_{F\times K}$ be an $F\times K$ array
consisting of $1,2,\cdots,S$ and an additional symbol $*$.
$\textbf{P}$ is called a $(K,F,Q,S)$ placement delivery array
(PDA) if the following conditions are satisfied:
\begin{itemize}
 \item[C1.] The symbol $*$ appears exactly $Q$ times in each column;
 \item[C2.] Each integer $s\in[S]$ occurs at least once in the array;
 \item[C3.] For any two distinct pairs
 $(j_1,k_1),(j_2,k_2)\in[F]\times[K]$,
 if $p_{j_1,k_1}=p_{j_2,k_2}=s\in[S]$, then $j_1\neq j_2$,
 $k_1\neq k_2$
 and $p_{j_1,k_2}=p_{j_2,k_1}=*$.
\end{itemize}
\end{defn}


\begin{lem}\cite[Theorem 1]{Yan17}\label{lem-pda-ccp}
For any given $(K,F,Q,S)$ PDA $\textbf{P}=[p_{j,k}]_{F\times K}$,
there exists an $F$-division caching scheme for any $(K, M, N)$
caching system with $\frac{M}{N}=\frac{Q}{F}$ and rate
$R=\frac{S}{F}$. Precisely, each user is able to decode its
requested file correctly for any demand $\textbf{d}$.
\end{lem}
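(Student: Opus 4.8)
The plan is to read the PDA $\textbf{P}$ as a joint blueprint for placement and delivery, and to show that its three defining conditions translate, one by one, into the cache-size constraint (C1), the rate (C2), and the decodability (C3). To set up, I would split each file into $F$ equal packets, $\textbf{W}_i=(W_{i,1},\ldots,W_{i,F})$, indexing packets by the rows $[F]$ of $\textbf{P}$ and users by the columns $[K]$. For the placement phase I would let user $k$ cache packet $W_{i,j}$ of every file $i\in[N]$ exactly when $p_{j,k}=*$. By C1 the symbol $*$ occurs $Q$ times in each column, so user $k$ caches $Q$ packets of each file, i.e.\ $NQ$ packets in total; since a file consists of $F$ packets, this is the fraction $Q/F$ of the library, matching $M/N=Q/F$. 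This placement is uncoded.

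For the delivery phase, given a demand $\textbf{d}=(d_1,\ldots,d_K)$, I would have the server broadcast, for each $s\in[S]$, the single coded packet
\begin{align*}
X_s=\bigoplus_{(j,k)\,:\,p_{j,k}=s} W_{d_k,\,j},
\end{align*}
where the XOR runs over all cells of $\textbf{P}$ carrying the symbol $s$, and set $\textbf{X}_{\textbf{d}}=(X_1,\ldots,X_S)$. By C2 every $s\in[S]$ appears at least once, so exactly $S$ coded packets are transmitted, each the size of one file-packet, yielding rate $R=S/F$.

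The crux of the proof, resting entirely on C3, is to verify that each user can decode. Fix a user $k$ and a packet $W_{d_k,j}$ it has not cached, so $p_{j,k}=s$ for some $s\in[S]$; this packet is one summand of $X_s$. For any other summand $W_{d_{k'},j'}$ of $X_s$, arising from a cell $(j',k')\neq(j,k)$ with $p_{j',k'}=s$, condition C3 forces $j'\neq j$, $k'\neq k$, and $p_{j',k}=*$. The last equality means user $k$ has cached packet $j'$ of every file, in particular $W_{d_{k'},j'}$. Hence user $k$ knows every summand of $X_s$ except the desired $W_{d_k,j}$, and recovers it by cancelling the known terms. Running this over all non-$*$ cells in column $k$ reconstructs $\textbf{W}_{d_k}$, establishing correctness for an arbitrary $\textbf{d}$. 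The one delicate point---indeed the whole reason the construction works---is this cancellation step: C3 guarantees that within any single transmission $X_s$ the summands lie in distinct rows and distinct columns, so that every interfering term sits in the requesting user's cache. Verifying this interference pattern is the main obstacle, and it is exactly what C3 was designed to provide.
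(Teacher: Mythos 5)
Your proof is correct and is essentially the argument the paper relies on: the paper does not prove this lemma itself but cites \cite[Theorem 1]{Yan17}, and your construction (cache packet $j$ of every file at user $k$ iff $p_{j,k}=*$; broadcast $X_s=\bigoplus_{p_{j,k}=s}W_{d_k,j}$; use C3 to cancel interference) is exactly the standard proof given there. Nothing is missing.
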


Illustrative examples of PDA and details for constructing caching
schemes from PDAs can be found in \cite{Yan17}.

\subsection{Introduction to Combinatorial Designs}

We review some basic concepts of combinatorial designs and several
families of designs that are related to our constructions of
caching schemes. Properties and constructions of such families of
designs can be found in \cite{Stinson,Colbourn}.

\begin{defn}[Design]\label{def-dsgn}
A design is a pair $(V,\mathcal B)$, where $V$ is a set of
elements called points and $\mathcal B$ is a collection (i.e.,
multiset) of nonempty subsets of $V$ called blocks.
\end{defn}

A design $(V,\mathcal B)$ is said to be \emph{simple} if it does
not contain repeated blocks, i.e., $\mathcal B$ is a set (of
blocks) rather than a multiset.

\begin{defn}[Configuration]\label{def-Cfg}
A configuration $(v_{r}, b_k)$ is a design $(V,\mathcal B)$, where
$V$ is a $v$-set of points and $\mathcal B$ is a $b$-set of
blocks, such that the following conditions are satisfied:
\begin{itemize}
 \item [(i)] Each block is a $k$-subset of $V$;
 \item [(ii)] Each point is contained in exactly $r$ blocks;
 \item [(iii)] Every $2$-subset of $V$ is contained in at most
 one block.
\end{itemize}
It is not hard to verify that condition (iii) is equivalent to the
following condition (iii$'$)
\begin{itemize}
 \item [(iii$'$)] Every pair of distinct blocks have at most
 one point in common.
\end{itemize}
\end{defn}

If $(V,\mathcal B)$ is a configuration $(v_{r}, b_k)$, we have
$bk=vr$, or equivalently, $$b=\frac{vr}{k}.$$ Moreover, since
every $2$-subset of $V$ is contained in at most one block, we have
$$\binom{v}{2}\geq b\binom{k}{2}.$$ That is, $$\frac{v(v-1)}{2}\geq
\frac{vr}{k}\frac{k(k-1)}{2},$$ from which we can obtain $$k\leq
\frac{v-1}{r}+1,$$ where equality holds if and only if
$(V,\mathcal B)$ is a $(v,k,1)$-BIBD, which is defined as follows.

\begin{defn}[$(v,k,1)$-BIBD]\label{def-bibd}
Let $v$ and $k$ be positive integers such that $v > k \geq 2$. A
$(v,k,1)$-balanced incomplete block design (abbreviated to
$(v,k,1)$-BIBD) is a design $(V, \mathcal B)$ such that the
following properties are satisfied:
\begin{itemize}
 \item [(i)] $|V|=v$;
 \item [(ii)] Each block is a $k$-subset of $V$;
 \item [(iii)] Every $2$-subset of $V$ is contained in exactly
 one block.
\end{itemize}
\end{defn}

Another important family of designs related to our construction is
$t$-design, which is defined as follows.
\begin{defn}[$t$-design]\label{def-t-dsn}
Let $v,k,\lambda$ and $t$ be positive integers such that $v>k\geq
t$. A $t$-$(v,k,\lambda)$-design is a design $(V,\mathcal B)$
satisfying the following conditions:
\begin{itemize}
 \item [(i)] $|V|=v$;
 \item [(ii)] Each block is a $k$-subset of $V$;
 \item [(iii)] Every $t$-subset of $V$ is contained in exactly
 $\lambda$ blocks.
\end{itemize}
\end{defn}

A $t$-$(v,k,1)$-design is also known as a Steiner system and is
usually denoted by $\text{S}(t,k,v)$.

If $(V,\mathcal B)$ is a $t$-$(v,k,\lambda)$-design and
$Y\subseteq V$ such that $|Y|=s\leq t$, then there are exactly
\begin{align}\label{lmd-s-t-des}
\lambda_s=\frac{\lambda\binom{v-s}{t-s}}{\binom {k-s}
{t-s}}\end{align} blocks in $\mathcal B$ that contain all points
in $Y$. In particular, the number of blocks in a
$t$-$(v,k,\lambda)$-design is
\begin{align*}
b:=\lambda_0=\frac{\lambda\binom{v}{t}}{\binom {k}
{t}}.\end{align*}


\section{A new Perspective of Placement Delivery Array Design}

We now show that designing a feasible placement delivery array is
equivalent to constructing three binary matrices satisfying some
certain conditions.
\begin{thm}\label{thm-PDA-SNC} There exists a $(K,F,Q,S)$ PDA
if and only if there exist three binary matrices $\textbf{C}_{ X,
Y}$, $\textbf{C}_{ X, Z}$ and $\textbf{C}_{ Y, Z}$, where $ X$ is
an $F$-set, $ Y$ is an $S$-set, and $ Z$ is a $K$-set, such that
the following conditions E1$-$E5 are satisfied:
\begin{itemize}
 \item[E1\!.] For every $z\in Z$,
 $$\left|\{x\in  X:
 \textbf{C}_{ X, Z}(x,z)=1\}\right|=|X|-Q.$$
 \item[E2.\!] For every $y\in  Y$,
 $$\left\{z\in\! Z:\textbf{C}_{Y,
 Z}(y,z)=1\right\}\neq\emptyset.$$
 \item[E3.\!] For every $(x,y)\in X\times Y$ such that
 $\textbf{C}_{X,Y}(x,y)=1$, there is exactly one
 $z\in Z$ such that
 $$\textbf{C}_{X,Z}(x,z)=\textbf{C}_{Y,Z}(y,z)=1.$$
 \item[E4.\!] For every $(x,z)\in X\times
 Z$ such that $\textbf{C}_{X,Z}(x,z)=1$, there is exactly one
 $y\in Y$ such that
 $$\textbf{C}_{X, Y}(x,y)=\textbf{C}_{Y, Z}(y,z)=1.$$
 \item[E5.\!] For every $(y,z)\in Y\times Z$ such that
 $\textbf{C}_{Y, Z}(y,z)=1$, there is exactly one
 $x\in X$ such that $$\textbf{C}_{X, Y}(x,y)
 =\textbf{C}_{X, Z}(x,z)=1.$$
\end{itemize}
\end{thm}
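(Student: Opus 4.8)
The plan is to set up an explicit dictionary between a PDA and the triple of matrices, and then check that the defining conditions C1--C3 and E1--E5 translate into one another in both directions. Throughout I identify $X$ with the row index set, $Y$ with the symbol set, and $Z$ with the column index set. Given a $(K,F,Q,S)$ PDA $\textbf{P}=[p_{x,z}]$, I set $\textbf{C}_{X,Z}(x,z)=1$ exactly when $p_{x,z}\neq *$; $\textbf{C}_{X,Y}(x,y)=1$ exactly when the symbol $y$ appears somewhere in row $x$; and $\textbf{C}_{Y,Z}(y,z)=1$ exactly when $y$ appears somewhere in column $z$. Conversely, given three matrices obeying E1--E5, I build $\textbf{P}$ by setting $p_{x,z}=*$ whenever $\textbf{C}_{X,Z}(x,z)=0$, and when $\textbf{C}_{X,Z}(x,z)=1$ I let $p_{x,z}$ be the unique symbol $y$ furnished by E4.

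For the forward direction, E1 is just a restatement of C1, since a column with $Q$ stars has exactly $F-Q=|X|-Q$ integer entries, and E2 follows from C2, because a symbol that occurs somewhere lies in at least one column. The conditions E3--E5 are where C3 does its work: the clauses $j_1\neq j_2$ and $k_1\neq k_2$ say that a symbol repeats in no row and no column, which supplies the ``at most one'' half of each ``exactly one'' assertion, while the existence half is immediate from the definitions. The only way a spurious second witness could arise is if some opposite-corner entry were an integer, and this is precisely what the clause $p_{j_1,k_2}=p_{j_2,k_1}=*$ of C3 forbids; writing this out yields E3, E4, and E5 in turn.

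For the backward direction, C1 is E1 read in reverse. To verify C2 I trace a symbol $y$ through the conditions: E2 places $y$ in some column $z$, E5 then yields a unique row $x$ with $\textbf{C}_{X,Z}(x,z)=1$, and E4 applied at $(x,z)$ identifies the resulting integer as $y$ itself, so $y$ occurs in $\textbf{P}$. The substantive step is C3. Assuming two distinct cells carry a common symbol $s$, I invoke E3 to force their rows to differ and E5 to force their columns to differ. For the opposite-corner star condition I argue by contradiction: if $(j_1,k_2)$ held an integer, then since $\textbf{C}_{X,Y}(j_1,s)=1$ and $\textbf{C}_{Y,Z}(s,k_2)=1$, the uniqueness in E4 would force that integer to equal $s$, producing two cells bearing $s$ in the common row $j_1$ and contradicting the distinct-row conclusion just obtained; the symmetric argument handles $(j_2,k_1)$.

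I expect the backward verification of C3 to be the main obstacle. It is the only place where all three matrices and several of the conditions interact at once, and the delicate point is that the star property of the opposite corners must be deduced rather than assumed: one feeds the ``distinct rows/columns'' conclusion of one sub-step back into a uniqueness argument to rule out a stray integer. A minor additional check, that $Q<F$ so that $\textbf{P}$ is a genuine PDA, follows by noting that $S\geq 1$ together with E2 and E5 forces at least one integer entry, whence E1 gives $F-Q\geq 1$.
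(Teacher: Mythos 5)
Your proposal is correct and follows essentially the same route as the paper: the identical dictionary in both directions (stars exactly where $\textbf{C}_{X,Z}$ vanishes, row- and column-incidence matrices recording where each symbol appears, and the unique symbol furnished by E4 defining each integer entry), with C1/C2 translating to E1/E2 and the star clause of C3 supplying the uniqueness halves of E3--E5. The only cosmetic difference is in the opposite-corner step of C3, where you pin the stray integer to equal $s$ via E4 and then contradict the distinct-rows consequence of E3, while the paper contradicts E3 (resp.\ E5) directly; your explicit trace verifying C2 and your closing check that $Q<F$ are details the paper leaves implicit.
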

\begin{proof}
The proof is given in Appendix A.
\end{proof}

The following theorem provides a set of conditions that are
sufficient for the existence of a feasible PDA and easier to
construct than the conditions in Theorem \ref{thm-PDA-SNC}.

\begin{thm}\label{thm-PDA-SFC} Let $\textbf{C}_{X,
Y}$, $\textbf{C}_{X, Z}$, $\textbf{C}_{Y, Z}$ be three binary
matrices, where $ X$ is an $F$-set, $Y$ is an $S$-set, and $Z$ is
a $K$-set. For every $z\in  Z$, denote
$$U_{z}^{(1)}:=\{x\in  X: \textbf{C}_{ X,
Z}(x,z)=1\}$$ and $$U_{z}^{(2)}:=\{y\in Y: \textbf{C}_{Y,
Z}(y,z)=1\}.$$ Then there exists a $(K,F,Q,S)$ PDA if the
conditions E1-E3 and E6 are satisfied, where E1-E3 are defined as
in Theorem \ref{thm-PDA-SNC}, and E6 is defined as follows.
\begin{itemize}
 \item[E6.\!] For every $z\in  Z$
 and every $(x,y)\in U_{z}^{(1)}\!\times
 U_{z}^{(2)}$,
 $$\left|N_{z}(x)\right|
 =\left|N_{z}(y)\right|\geq 1,$$ where
 $$N_{z}(x):=\{y'\in U_{z}^{(2)}\!:
 \textbf{C}_{X, Y}(x,y')=1\} $$ and
 $$N_{z}(y):=\{x'\in U_{z}^{(1)}:
 \textbf{C}_{X, Y}(x',y)=1\};$$
\end{itemize}
\end{thm}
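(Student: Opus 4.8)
The plan is to reduce the statement to Theorem~\ref{thm-PDA-SNC}. Keeping $\textbf{C}_{X,Z}$ and $\textbf{C}_{Y,Z}$ fixed, I would replace $\textbf{C}_{X,Y}$ by a carefully chosen sub-matrix $\textbf{C}'_{X,Y}$, obtained by deleting some of its $1$'s, so that the triple $\textbf{C}'_{X,Y},\textbf{C}_{X,Z},\textbf{C}_{Y,Z}$ satisfies all five conditions E1--E5; the existence of a $(K,F,Q,S)$ PDA then follows immediately from Theorem~\ref{thm-PDA-SNC}. Since E1 involves only $\textbf{C}_{X,Z}$ and E2 involves only $\textbf{C}_{Y,Z}$, both are preserved automatically. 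Moreover, as $\textbf{C}'_{X,Y}\le\textbf{C}_{X,Y}$ entrywise while the witness $z$ in E3 depends only on $\textbf{C}_{X,Z}$ and $\textbf{C}_{Y,Z}$, condition E3 also survives the passage to $\textbf{C}'_{X,Y}$. Thus the entire task is to thin out the $1$'s of $\textbf{C}_{X,Y}$ so as to upgrade the ``regularity'' hypothesis E6 into the two ``exactly one'' conditions E4 and E5.

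First I would encode the data as graphs. For each $z\in Z$, let $G_z$ be the bipartite graph with parts $U_z^{(1)}$ and $U_z^{(2)}$ in which $x\in U_z^{(1)}$ and $y\in U_z^{(2)}$ are joined exactly when $\textbf{C}_{X,Y}(x,y)=1$; then $N_z(x)$ and $N_z(y)$ are precisely the neighbourhoods in $G_z$. The key structural observation is that E3 distributes the $1$'s of $\textbf{C}_{X,Y}$ among the $G_z$ without overlap: if $\textbf{C}_{X,Y}(x,y)=1$, then by E3 there is a \emph{unique} $z$ with $\textbf{C}_{X,Z}(x,z)=\textbf{C}_{Y,Z}(y,z)=1$, i.e.\ with $x\in U_z^{(1)}$ and $y\in U_z^{(2)}$, so the edge $(x,y)$ lies in exactly one graph $G_z$. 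Hence the edge sets $E(G_z)$, $z\in Z$, are pairwise disjoint.

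Next I would bring in E6 and Lemma~\ref{match-B-G}. Condition E6, asserting $|N_z(x)|=|N_z(y)|\ge 1$ for all $(x,y)\in U_z^{(1)}\times U_z^{(2)}$, says exactly that each $G_z$ is $d_z$-regular for some integer $d_z\ge 1$; in particular $|U_z^{(1)}|=|U_z^{(2)}|$, and by Lemma~\ref{match-B-G} each $G_z$ has a perfect matching $M_z$. I would then define $\textbf{C}'_{X,Y}(x,y)=1$ precisely when $(x,y)\in\bigcup_{z\in Z}M_z$; this union is disjoint by the previous paragraph, so each retained $1$ comes from a single $M_z$. Verifying E4 is then routine: fixing $(x,z)$ with $x\in U_z^{(1)}$, a point $y$ contributes iff $\textbf{C}'_{X,Y}(x,y)=1$ and $y\in U_z^{(2)}$, and by the disjointness such a $y$ must satisfy $(x,y)\in M_z$, so the perfect matching leaves exactly one choice. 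Condition E5 follows symmetrically by matching each $y\in U_z^{(2)}$ to its unique partner in $M_z$.

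The main obstacle, and the crux of the argument, is the compatibility issue in the last step: the matchings $M_z$ are chosen independently for each $z$, and a single row $x$ may lie in several sets $U_z^{(1)}$, so a priori $x$ could accumulate contributions from different $M_z$'s and violate the ``exactly one'' requirement. What rescues the construction is the disjoint-edge observation coming from E3: once $x\in U_z^{(1)}$ and $y\in U_z^{(2)}$ are fixed together with $\textbf{C}'_{X,Y}(x,y)=1$, the edge $(x,y)$ can belong only to $M_z$ and to no other matching, so the number of neighbours of $x$ inside $U_z^{(2)}$ is governed solely by $M_z$. Making this localisation precise -- and checking that deleting $1$'s never destroys E3 -- is the part requiring care; everything else is bookkeeping, after which Theorem~\ref{thm-PDA-SNC} delivers the desired $(K,F,Q,S)$ PDA.
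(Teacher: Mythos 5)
Your proposal is correct and follows essentially the same route as the paper's proof: thin $\textbf{C}_{X,Y}$ to a union of perfect matchings of the bipartite graphs $\mathcal G_z$ (obtained via Lemma~\ref{match-B-G} from the regularity that E6 provides) and invoke Theorem~\ref{thm-PDA-SNC}. In fact you supply more detail than the paper, which dismisses the verification of E1--E5 as ``easy to see''; your explicit observation that E3 makes the edge sets $E(\mathcal G_z)$ pairwise disjoint is precisely the fact needed to confirm E4 and E5 for the thinned matrix.
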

\begin{proof}
We will construct a binary matrix $\textbf{C}_{X, Y}'$ such that
the matrices $\textbf{C}_{X, Y}'$, $\textbf{C}_{X, Z}$,
$\textbf{C}_{Y, Z}$ satisfy conditions E1-E5.

Consider the bipartite graph $\mathcal G_z$ with bipartition
$(U_{z}^{(1)}, U_{z}^{(2)})$ and edge set
$$E(\mathcal G_z)=\{(x,y)\in U_{z}^{(1)}\times
U_{z}^{(2)}:\textbf{C}_{X, Y}(x,y)=1\}.$$ Clearly, condition E6
implies that $\mathcal G_z$ is $k$-regular, where
$k=\left|N_{z}(x)\right|>0$ for any given $x\in U_{z}^{(1)}$. By
Lemma \ref{match-B-G}, $\mathcal G_z$ has a perfect matching
$\mathcal M_z$. Now, let $\textbf{C}_{ X, Y}'$ be a binary matrix
such that for each $(x,y)\in X\times Y$: $\textbf{C}'_{ X,
Y}(x,y)=1$ if $xy\in\mathcal M_z$ for some $z\in Z$, and
$\textbf{C}'_{ X, Y}(x,y)=0$ otherwise.

By the construction, it is easy to see that $\textbf{C}_{X, Y}'$,
$\textbf{C}_{X, Z}$, $\textbf{C}_{Y, Z}$ satisfy conditions E1-E5.
Hence, by Theorem \ref{thm-PDA-SNC}, there exists a $(K,F,Q,S)$
PDA.
\end{proof}

The following corollary can be viewed as a generalization of
\cite[Theorem 7]{Cheng17}.
\begin{cor}\label{cor-PDA-SFC}
Suppose $\textbf{C}_{X, Y}$, $\textbf{C}_{X, Z}$, $\textbf{C}_{Y,
Z}$ are binary matrices satisfying conditions E3 $($as in Theorem
\ref{thm-PDA-SNC}$)$, E6 $($as in Theorem \ref{thm-PDA-SFC}$)$, as
well as the following conditions E1$'$, E2$'$ and E7:
\begin{itemize}
 \item[E1$'$.\!\!] ~For every $z\in Z$,
 $$\left|\{x\in X\!: \textbf{C}_{X, Z}(x,z)=1\}\right|=D_Z>0.$$
 \item[E2$'$.\!\!] ~For every $y\in Y$,
 $$\left|\{z\in Z\!: \textbf{C}_{Y, Z}(y,z)=1\}\right|=D_Y>0.$$
 \item[E7.] ~For every $x\in X$,
 $$\left|\{z\in Z\!: \textbf{C}_{X, Z}(x,z)=1\}\right|=D_X>0.$$
\end{itemize}
Then there exists a $(K,F,Q,S)$ PDA for each of the following
three sets of parameters:
\begin{itemize}
 \item[1)] $K=|X|$, $F=|Y|$, $Q=|Y|-D_X$, and $S=|Z|$;
 \item[2)] $K=|X|$, $F=|Z|$, $Q=|Z|-D_X$, and $S=|Y|$;
 \item[3)] $K=|Z|$, $F=|X|$, $Q=|X|-D_Z$, and $S=|Y|$;
\end{itemize}
\end{cor}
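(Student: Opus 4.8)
The plan is to first reduce to the fully symmetric situation in which all of E3, E4, E5 hold, and then to read off the three PDAs by applying Theorem \ref{thm-PDA-SNC} with the three sets $X$, $Y$, $Z$ playing the roles of rows ($F$-set), columns ($K$-set) and symbols ($S$-set) in three different ways. First I would invoke the construction in the proof of Theorem \ref{thm-PDA-SFC}: since E3 and E6 hold, Lemma \ref{match-B-G} produces, for each $z\in Z$, a perfect matching $\mathcal M_z$ of the $k$-regular bipartite graph $\mathcal G_z$, and assembling these matchings yields a binary matrix $\textbf{C}_{X,Y}'$ for which the triple $\textbf{C}_{X,Y}'$, $\textbf{C}_{X,Z}$, $\textbf{C}_{Y,Z}$ satisfies conditions E3, E4 and E5. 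These three conditions are exactly the statement that the $1$-entries of the three matrices decompose into ``triangles'' $(x,y,z)$ with $\textbf{C}_{X,Y}'(x,y)=\textbf{C}_{X,Z}(x,z)=\textbf{C}_{Y,Z}(y,z)=1$, each $1$-entry of each matrix lying in exactly one triangle; in particular the collection $\{\text{E3},\text{E4},\text{E5}\}$ is invariant under every permutation of the three sets $X,Y,Z$, so it will persist after we transpose and relabel the matrices.

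Next I would record the degree counts forced by this triangle structure together with E1$'$, E2$'$, E7. Fixing $x\in X$, conditions E3 and E4 each set up a bijection between the triangles through $x$ and, respectively, the $y$ with $\textbf{C}_{X,Y}'(x,y)=1$ and the $z$ with $\textbf{C}_{X,Z}(x,z)=1$; hence the row weight of $\textbf{C}_{X,Y}'$ at $x$ equals the row weight of $\textbf{C}_{X,Z}$ at $x$, which is $D_X$ by E7. Symmetrically, E3 and E5 give that the column weight of $\textbf{C}_{X,Y}'$ at each $y$ equals the row weight of $\textbf{C}_{Y,Z}$ at $y$, namely $D_Y$ by E2$'$. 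Thus $\textbf{C}_{X,Y}'$ has constant row weight $D_X$ and constant column weight $D_Y$, and in particular it has no zero row and no zero column.

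With these facts in hand, each of the three parameter sets follows by applying Theorem \ref{thm-PDA-SNC} to a suitable relabelling. For set 3) I would take $(\mathcal X,\mathcal Y,\mathcal Z)=(X,Y,Z)$ and use the matrices $\textbf{C}_{X,Y}'$, $\textbf{C}_{X,Z}$, $\textbf{C}_{Y,Z}$ directly: E1 holds with $|X|-Q=D_Z$ (so $Q=|X|-D_Z$) by E1$'$, E2 follows from E2$'$, and E3--E5 hold by construction, giving a $(|Z|,|X|,|X|-D_Z,|Y|)$ PDA. For set 2) I would take $(\mathcal X,\mathcal Y,\mathcal Z)=(Z,Y,X)$, i.e. the matrices $(\textbf{C}_{Y,Z})^{T}$, $(\textbf{C}_{X,Z})^{T}$, $(\textbf{C}_{X,Y}')^{T}$; here the rows--columns matrix is $(\textbf{C}_{X,Z})^{T}$, whose column weights equal the row weights $D_X$ of $\textbf{C}_{X,Z}$, so E1 gives $Q=|Z|-D_X$, while E2 asks that $(\textbf{C}_{X,Y}')^{T}$ have no zero row, i.e. that $\textbf{C}_{X,Y}'$ have no zero column, which holds by the previous paragraph. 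For set 1) I would take $(\mathcal X,\mathcal Y,\mathcal Z)=(Y,Z,X)$, i.e. the matrices $\textbf{C}_{Y,Z}$, $(\textbf{C}_{X,Y}')^{T}$, $(\textbf{C}_{X,Z})^{T}$; now the rows--columns matrix is $(\textbf{C}_{X,Y}')^{T}$, whose column weights are the row weights $D_X$ of $\textbf{C}_{X,Y}'$, giving $Q=|Y|-D_X$, and E2 requires $(\textbf{C}_{X,Z})^{T}$ to have no zero row, i.e. $\textbf{C}_{X,Z}$ to have no zero column, which is E1$'$. In every case E3--E5 are inherited from the permutation-invariance noted above, so Theorem \ref{thm-PDA-SNC} yields the desired PDA.

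I expect the main obstacle to be the degree-counting step for $\textbf{C}_{X,Y}'$: unlike $\textbf{C}_{X,Z}$ and $\textbf{C}_{Y,Z}$, whose relevant weights are given outright by E7, E1$'$ and E2$'$, the regularity of the constructed matrix $\textbf{C}_{X,Y}'$ is only implicit and must be extracted from the triangle bijection, and it is precisely this constant row weight $D_X$ that supplies condition E1 in set 1). Care is also needed to match each transposed matrix to the correct slot of Theorem \ref{thm-PDA-SNC}, so that the ``rows--columns'' matrix in each case is the one whose constant column weight dictates $Q$.
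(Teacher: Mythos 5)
Your proposal is correct and follows essentially the same route as the paper: construct $\textbf{C}_{X,Y}'$ via the matching argument from Theorem \ref{thm-PDA-SFC} so that E3--E5 hold, then permute the roles of $X$, $Y$, $Z$ and apply Theorem \ref{thm-PDA-SNC} three times. In fact you make explicit a detail the paper disposes of with ``we can check,'' namely the triangle-bijection argument showing $\textbf{C}_{X,Y}'$ has constant row weight $D_X$ and column weight $D_Y$, which is exactly what supplies condition E1 (hence $Q=|Y|-D_X$) for parameter set 1) and condition E2 for parameter set 2).
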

\begin{proof}
First, as in the proof of Theorem \ref{thm-PDA-SFC}, we can
construct binary matrices $\textbf{C}_{X, Y}'$, $\textbf{C}_{X,
Z}$, $\textbf{C}_{Y, Z}$ satisfying conditions E3$-$E5. Clearly,
$\textbf{C}_{X, Y}'$, $\textbf{C}_{X, Z}$, $\textbf{C}_{Y, Z}$
still satisfy conditions E1$'$, E2$'$ and E7, and hence we can
easily verify that they satisfy conditions E1-E5 of Theorem
\ref{thm-PDA-SNC}, where $Q=|X|-D_Z$. Therefore we can obtain a
$(K,F,Q,S)$ PDA with $K=|Z|$, $F=|X|$, $Q=|X|-D_Z$, and $S=|Y|$,
which has parameter set 3).

Now, relabelling $X$ by $Z$, $Y$ by $X$, and $Z$ by $Y$, we can
check that conditions E1-E5 of Theorem \ref{thm-PDA-SNC} are still
satisfied by $\textbf{C}_{X, Y}'$, $\textbf{C}_{X, Z}$ and
$\textbf{C}_{Y, Z}$. Therefore, we obtain a $(K,F,Q,S)$ PDA with
$K=|X|$, $F=|Y|$, $Q=|Y|-D_X$, and $S=|Z|$, which has parameter
set 1).

Similarly, relabelling $X$ by $Z$, and $Z$ by $X$, we can obtain a
$(K,F,Q,S)$ PDA with $K=|X|$, $F=|Y|$, $Q=|Y|-D_X$, and $S=|Z|$,
which has parameter set 2).
\end{proof}

In subsequent sections, we will construct binary matrices
$\textbf{C}_{X, Y}$, $\textbf{C}_{X, Z}$ and $\textbf{C}_{Y, Z}$
satisfying the desired conditions using projective geometries over
finite fields, configurations and $t$-designs, respectively. Then,
from such matrices, we can construct the corresponding PDAs and
coded caching schemes.

\section{Coded Caching Based on Projective Geometries over
Finite Fields} As a simple application of Corollary
\ref{cor-PDA-SFC}, we give a construction of PDAs based on
projective geometries over finite fields. Our method gives a
different proof of \cite[Theorem 3]{Krishnan18} and two more
families of PDAs.

We first need to introduce the notation of Gaussian binomial
coefficient and the concept of projective space over finite
fields. Let $q$ be a prime power and $\mathbb F_q$ be the $q$-ary
field. For any non-negative integers $\ell$ and $m$ such that
$0\leq m\leq \ell$, the Gaussian binomial coefficient (or
$q$-binomial coefficient), denoted by $\qbin{\ell}{m}{q}$, is
defined as
$$\qbin{\ell}{m}{q}\!=\!\prod_{i=0}^{m-1}\!
\frac{q^{\ell-i}\!-\!1}{q^{m-i}\!-\!1}=
\frac{(q^{\ell}\!-\!1)(q^{\ell-1}\!-\!1)\cdots(q^{\ell-m+1}\!-\!1)}
{(q^{m}\!-\!1)(q^{m-1}\!-\!1)\cdots(q\!-\!1)}.$$ Clearly,
$\qbin{\ell}{\ell}{q}=\qbin{\ell}{0}{q}=1$. Some other useful
properties of Gaussian binomial coefficient, which can be found in
\cite{Hirschfeld}, are listed in the following lemma.
\begin{lem}\label{lem-G-coef}
Suppose $0\leq s\leq m, t\leq \ell$. For any fixed
$\ell$-dimensional vector space $V$ over $\mathbb F_q$, the
following hold.
\begin{itemize}
 \item[1)] The number of $m$-dimensional subspaces of $V$ is
 $\qbin{\ell}{m}{q}$.
 \item[2)] The number of $m$-dimensional subspaces of $V$
 that contain a fixed $s$-dimensional subspace of $V$ is
 $\qbin{\ell-s}{m-s}{q}.$
 \item[3)] The number of $m$-dimensional subspaces of $V$
 intersecting a fixed $t$-dimensional subspace of $V$ in a fixed
 $s$-dimensional subspace is $q^{(m-s)(t-s)}\qbin{\ell-t}{m-s}{q}$.
\end{itemize}
\end{lem}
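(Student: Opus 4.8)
The plan is to establish the three counts in turn, using the standard device of counting ordered linearly independent tuples (equivalently, ordered bases of subspaces) and reducing Parts 2) and 3) to Part 1) via quotient spaces.

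For Part 1), I would count the ordered linearly independent $m$-tuples of vectors in $V$. The first vector may be any nonzero vector ($q^{\ell}-1$ choices), and once $v_1,\dots,v_{i-1}$ are fixed the $i$-th vector need only avoid their span, a subspace of size $q^{i-1}$. Hence there are $\prod_{i=0}^{m-1}(q^{\ell}-q^{i})$ such tuples. Running the same argument inside a fixed $m$-dimensional subspace shows each such subspace has exactly $\prod_{i=0}^{m-1}(q^{m}-q^{i})$ ordered bases. Dividing and factoring $q^{i}$ out of each numerator and denominator term, the powers of $q$ cancel and leave $\prod_{i=0}^{m-1}\frac{q^{\ell-i}-1}{q^{m-i}-1}=\qbin{\ell}{m}{q}$, as claimed.

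For Part 2), let $W$ be the fixed $s$-dimensional subspace. The correspondence theorem gives a bijection between $m$-dimensional subspaces of $V$ containing $W$ and $(m-s)$-dimensional subspaces of the quotient $V/W$, which has dimension $\ell-s$. Applying Part 1) to $V/W$ immediately yields $\qbin{\ell-s}{m-s}{q}$. For Part 3), let $T$ be the fixed $t$-dimensional subspace and $S\subseteq T$ the fixed $s$-dimensional intersection. Since every admissible $U$ satisfies $S\subseteq U$, I would pass to the quotient $V/S$ (dimension $\ell-s$): writing $\bar U=U/S$ and $\bar T=T/S$ (dimension $t-s$), the condition $U\cap T=S$ becomes $\bar U\cap\bar T=\{0\}$ with $\dim\bar U=m-s$. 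This reduces matters to counting $d$-dimensional subspaces of an $n$-dimensional space meeting a fixed $e$-dimensional subspace trivially, where $d=m-s$, $e=t-s$, $n=\ell-s$. I would count these via the projection $\pi$ onto the $(n-e)$-dimensional space $(V/S)/\bar T$: a $d$-tuple spans such a subspace precisely when its $\pi$-images are linearly independent. By the argument of Part 1) there are $\prod_{i=0}^{d-1}(q^{n-e}-q^{i})$ independent image tuples, and each lifts in exactly $(q^{e})^{d}=q^{ed}$ ways since every nonempty fiber of $\pi$ is a coset of $\bar T$. Dividing by the $\prod_{i=0}^{d-1}(q^{d}-q^{i})$ ordered bases of a fixed $d$-dimensional subspace and simplifying gives $q^{ed}\qbin{n-e}{d}{q}$; substituting back $d,e,n$ recovers $q^{(m-s)(t-s)}\qbin{\ell-t}{m-s}{q}$.

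The main obstacle is Part 3. One must first verify carefully the equivalence ``$\bar U\cap\bar T=\{0\}$ with $\dim\bar U=d$'' $\Leftrightarrow$ ``the $\pi$-images of a basis of $\bar U$ are linearly independent'' (both directions follow because a dependence among the $v_i$ projecting to zero forces the offending combination into $\bar U\cap\bar T$), and then track the fiber sizes so that the power of $q$ produced by the lift matches the claimed exponent $(m-s)(t-s)$ after the substitutions $e=t-s$, $d=m-s$, $n-e=\ell-t$. Parts 1) and 2) are routine once the ordered-basis count is in hand.
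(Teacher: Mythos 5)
Your proposal is correct, and there is nothing in the paper to compare it against: the paper states this lemma without proof, citing it directly from Hirschfeld's book on projective geometries over finite fields. Your argument --- the ordered-independent-tuple count for 1), the correspondence-theorem reduction to $V/W$ for 2), and for 3) the passage to $V/S$ followed by the fiber count over the projection onto $(V/S)/\bar{T}$, where the exponent checks out as $ed=(m-s)(t-s)$ and $n-e=\ell-s-(t-s)=\ell-t$ --- is precisely the standard textbook proof of these classical facts, with the one delicate equivalence (independence of $\pi$-images characterizing bases of subspaces meeting $\bar{T}$ trivially) correctly identified and verified.
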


Now, we can give a construction of PDA based on projective
geometries over finite fields. Consider positive integers $k,m$
and $t$ such that $m+t\leq k$. Let $X$ be the set of all
$t$-dimensional subspaces of $\mathbb F_q^k$, $Y$ be the set of
all $m$-dimensional subspaces of $\mathbb F_q^k$, and $Z$ be the
set of all $(m+t)$-dimensional subspaces of $\mathbb F_q^k$. By 1)
of Lemma \ref{lem-G-coef}, we have $|X|=\qbin{k}{t}{q}$,
$|Y|=\qbin{k}{m}{q}$ and $|Z|=\qbin{k}{m+t}{q}$. We construct
three binary matrices $\textbf{C}_{X,Y}$, $\textbf{C}_{X,Z}$ and
$\textbf{C}_{Y,Z}$ as follows. For every $x\in X$ and $y\in Y$:
\begin{equation}\label{eq1-CS-PG}
\textbf{C}_{X,Y}(x,y)=\left\{\begin{aligned} &1&
\text{if}~ x\cap y=\{\textbf{0}\}, \\
&0& \text{otherwise,} ~ ~ ~ ~ ~ ~\\
\end{aligned} \right.
\end{equation}
where $\textbf{0}$ is the zero vector of $\mathbb F_q^k$.
Moreover, for $z\in Z$:
\begin{equation}\label{eq2-CS-PG}
\textbf{C}_{X,Z}(x,z)=\left\{\begin{aligned} &1&
\text{if}~x\subseteq z,~ \\
&0& \text{otherwise,} \\
\end{aligned} \right.
\end{equation}
and
\begin{equation}\label{eq3-CS-PG}
\textbf{C}_{Y,Z}(y,z)=\left\{\begin{aligned} &1&
\text{if}~y\subseteq z,~ \\
&0& \text{otherwise.} \\
\end{aligned} \right.
\end{equation}
We can prove that $\textbf{C}_{X,Y}$, $\textbf{C}_{X,Z}$ and
$\textbf{C}_{Y,Z}$ satisfy conditions E1$'$, E2$'$, E3, E6 and E7
of Corollary \ref{cor-PDA-SFC} by the following discussions:
\begin{itemize}
 \item Given an $m$-dimensional subspace $x$ of $\mathbb F_q^k$ and
 a $t$-dimensional subspace $y$ of $\mathbb F_q^k$ such that $x\cap
 y=\{\textbf{0}\}$, there is exactly one
 $(m+t)$-dimensional subspace $z$ of $\mathbb
 F_q^k$, i.e., $z=x\oplus y$ is the direct sum
 of $x$ and $y$, such that $x\subseteq z$ and $y\subseteq z$.
 So by \eqref{eq1-CS-PG}$-$\eqref{eq3-CS-PG}, Condition E3 is satisfied.
 \item Given an $(m+t)$-dimensional subspace $z$ of $\mathbb F_q^k$
 and a $t$-dimensional subspace $x$ of $\mathbb F_q^k$ such that
 $x\subseteq z$, by 3) of Lemma \ref{lem-G-coef}, there are $q^{mt}$
 $m$-dimensional subspaces $y$
 of $\mathbb F_q^k$ such that $y\subseteq z$ and $x\cap y=\{\textbf{0}\}$.
 Similarly, given an $(m+t)$-dimensional subspace $z$ of
 $\mathbb F_q^k$ and an
 $m$-dimensional subspace $y$ of $\mathbb F_q^k$ such that
 $y\subseteq z$, there are $q^{mt}$ $t$-dimensional subspaces $x$
 of $\mathbb F_q^k$ such that $x\subseteq z$ and
 $x\cap y=\{\textbf{0}\}$. By \eqref{eq1-CS-PG}$-$\eqref{eq3-CS-PG},
 for every $z\in Z$ every $x\in U_z^{(1)}$ and $y\in U_z^{(2)}$,
 $|N_z(x)|=|N_z(y)|=q^{mt}$, so condition E6 is satisfied.
 \item For every $t$-dimensional subspace $x$ of $\mathbb
 F_q^k$, by 2) of Lemma \ref{lem-G-coef},
 there are $\qbin{k-t}{m}{q}$~ $(m+t)$-dimensional subspaces $z$
 of $\mathbb F_q^k$ such that $x\subseteq z$. So by \eqref{eq2-CS-PG},
 condition E7 of Corollary
 \ref{cor-PDA-SFC} is satisfied and we have $D_X=\qbin{k-t}{m}{q}$.
 Similarly, we can verify that condition E2$'$ is satisfied and
 $D_Y=\qbin{k-m}{t}{q}$.
 \item For every $(m+t)$-dimensional subspace $z$ of $\mathbb
 F_q^k$, by 1) of Lemma \ref{lem-G-coef},
 there are $\qbin{m+t}{t}{q}$~ $t$-dimensional subspaces $x$
 of $\mathbb F_q^k$ such that $x\subseteq z$. So by \eqref{eq2-CS-PG},
 condition E1$'$ of Corollary \ref{cor-PDA-SFC} is satisfied and
 $D_Z=\qbin{m+t}{t}{q}$.
\end{itemize}
By Corollary \ref{cor-PDA-SFC}, there exists a $(K,F,Q,S)$ PDA,
where $(K,F,Q,S)$ can be any of the following three cases:

Case 1. $(K,F,Q,S)=(|X|,|Y|,|Y|-D_X,|Z|)$. Then by the above
discussions, we have \vspace{-2pt}\begin{align*}
(K,F,Q,S)=\left(\qbin{k}{t}{q}, \qbin{k}{m}{q},
\qbin{k}{m}{q}-\qbin{k-t}{m}{q},
\qbin{k}{m+t}{q}\right);\end{align*}

Case 2. $(K,F,Q,S)=(|X|,|Z|,|Z|-D_X,|Y|)$. In this case, we have
\vspace{-2pt}
\begin{align*}
(K,F,Q,S)=\left(\qbin{k}{t}{q}, \qbin{k}{m+t}{q},
\qbin{k}{m+t}{q}-\qbin{k-t}{m}{q},
\qbin{k}{m}{q}\right);\end{align*}

Case 3. $(K,F,Q,S)=(|Z|,|X|,|X|-D_Z,|Y|)$. In this case, we have
\vspace{-2pt}
\begin{align*}
(K,F,Q,S)=\left(\qbin{k}{m+t}{q}, \qbin{k}{t}{q},
\qbin{k}{t}{q}-\qbin{m+t}{t}{q},
\qbin{k}{m}{q}\right).\end{align*}

By Lemma \ref{lem-pda-ccp}, from the PDAs in the above three
cases, we can obtain correspondingly three families of caching
schemes as stated by the following theorem.
\begin{thm}\label{CS-PG}
Let $k,m$ and $t$ be positive integers satisfying $m+t\leq k$.
Then there exists a caching scheme for each of the following three
sets of parameters.
\begin{itemize}
 \item[1)] $K=\qbin{k}{t}{q}$,
 $\frac{M}{N}=1-\frac{\qbin{k-t}{m}{q}}{\qbin{k}{m}{q}}$,
 $R=\frac{\qbin{k}{m+t}{q}}{\qbin{k}{m}{q}}$, and
 $F=\qbin{k}{m}{q}$;
 \vspace{5pt}\item[2)] $K=\qbin{k}{t}{q}$,
 $\frac{M}{N}=1-\frac{\qbin{k-t}{m}{q}}{\qbin{k}{m+t}{q}}$,
 $R=\frac{\qbin{k}{m}{q}}{\qbin{k}{m+t}{q}}$, and
 $F=\qbin{k}{m+t}{q}$;
 \vspace{5pt}\item[3)] $K=\qbin{k}{m+t}{q}$, $\frac{M}{N}=1-
 \frac{\qbin{m+t}{t}{q}}{\qbin{k}{t}{q}}$,
 $R=\frac{\qbin{k}{m}{q}}{\qbin{k}{t}{q}}$ and
 $F=\qbin{k}{t}{q}$.
\end{itemize}
\end{thm}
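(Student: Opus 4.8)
The plan is to read off the three caching schemes as immediate consequences of the three PDAs already assembled in the preceding paragraphs, via Lemma~\ref{lem-pda-ccp}. The substantive work---checking that the matrices $\textbf{C}_{X,Y}$, $\textbf{C}_{X,Z}$, $\textbf{C}_{Y,Z}$ of \eqref{eq1-CS-PG}--\eqref{eq3-CS-PG} satisfy conditions E1$'$, E2$'$, E3, E6 and E7---has been done using Lemma~\ref{lem-G-coef}, and it yields the degree values $D_X=\qbin{k-t}{m}{q}$, $D_Y=\qbin{k-m}{t}{q}$ and $D_Z=\qbin{m+t}{t}{q}$. Corollary~\ref{cor-PDA-SFC} then guarantees a $(K,F,Q,S)$ PDA in each of its three parameter regimes (the three cases displayed above), so the remaining task is purely to translate each PDA into a caching scheme.

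First I would apply Lemma~\ref{lem-pda-ccp}, which says that a $(K,F,Q,S)$ PDA gives an $F$-division scheme with $\frac{M}{N}=\frac{Q}{F}$ and $R=\frac{S}{F}$, to the Case~1 tuple $(K,F,Q,S)=\bigl(\qbin{k}{t}{q},\,\qbin{k}{m}{q},\,\qbin{k}{m}{q}-\qbin{k-t}{m}{q},\,\qbin{k}{m+t}{q}\bigr)$. Dividing $Q$ by $F$ collapses to $\frac{M}{N}=1-\frac{\qbin{k-t}{m}{q}}{\qbin{k}{m}{q}}$, while $\frac{S}{F}=\frac{\qbin{k}{m+t}{q}}{\qbin{k}{m}{q}}$, which is exactly item~1). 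I would then repeat the identical substitution for the Case~2 and Case~3 tuples to recover items~2) and~3).

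Since every step is a direct invocation of an earlier result followed by arithmetic, there is no genuine obstacle here beyond bookkeeping. The one point that needs care is keeping the PDA parameters aligned with the caching quantities---forming $\frac{M}{N}$ as $\frac{Q}{F}$ rather than $\frac{Q}{K}$, and correctly tracking how the relabellings of $X$, $Y$, $Z$ inside Corollary~\ref{cor-PDA-SFC} permute which Gaussian coefficient plays the role of $K$, $F$, $Q$ and $S$ in each case. With those alignments fixed, the three claimed parameter sets appear verbatim, completing the proof.
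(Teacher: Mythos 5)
Your proposal is correct and follows exactly the paper's own route: the verification of conditions E1$'$, E2$'$, E3, E6, E7 with $D_X=\qbin{k-t}{m}{q}$, $D_Y=\qbin{k-m}{t}{q}$, $D_Z=\qbin{m+t}{t}{q}$ feeds into Corollary~\ref{cor-PDA-SFC} to yield the three PDAs, and Lemma~\ref{lem-pda-ccp} then converts each $(K,F,Q,S)$ tuple into a caching scheme via $\frac{M}{N}=\frac{Q}{F}$ and $R=\frac{S}{F}$. Your computed parameters match the paper's three cases verbatim, so nothing further is needed.
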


\vspace{5pt}It is easy to verify that
$1-\frac{\qbin{k-t}{m}{q}}{\qbin{k}{m+t}{q}}=
1-\frac{\qbin{m+t}{t}{q}}{\qbin{k}{t}{q}}$ and
$\frac{\qbin{k}{m}{q}}{\qbin{k}{m+t}{q}}=
\frac{\qbin{m+t}{t}{q}}{\qbin{k-m}{t}{q}}$. So we can see that the
construction in \cite[Theorem 3]{Krishnan18} coincide with our
construction in Theorem \ref{CS-PG} with parameter set 2).

\section{Coded Caching Based on Combinatorial Designs}
In this section, we present some caching schemes constructed from
configurations and $t$-designs.

\subsection{Caching Schemes from Configurations}
We first give a construction based on combinatorial
configurations.
\begin{thm}\label{CS-cfg}
If there exists a configuration $(v_{r}, b_k)$, then there exists
a caching scheme for each of the following three sets of
parameters.
\begin{itemize}
 \item[1)] $K=F=v$, $\frac{M}{N}=1-\frac{r}{v}$, and
 $R=\frac{r}{k}$;
 \item[2)] $K=v$, $\frac{M}{N}=1-\frac{k}{v}$,
 $R=\frac{k}{r}$, and $F=\frac{vr}{k}$;
 \item[3)] $K=\frac{vr}{k}$, $\frac{M}{N}=1-\frac{k}{v}$,
 $R=1$, and $F=v$.
\end{itemize}
\end{thm}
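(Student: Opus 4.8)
```latex
The plan is to mirror the projective-geometry construction of the previous
section: from a configuration $(v_r, b_k)$ we build three binary matrices
$\textbf{C}_{X, Y}$, $\textbf{C}_{X, Z}$, $\textbf{C}_{Y, Z}$ satisfying the
hypotheses of Corollary \ref{cor-PDA-SFC}, and then read off the three
parameter sets from the three conclusions of that corollary. Let
$(V,\mathcal B)$ be the configuration, so $|V|=v$, $|\mathcal B|=b=vr/k$,
each block is a $k$-set, each point lies in exactly $r$ blocks, and any two
distinct points lie together in at most one block. The natural choice is to
take one of the three index sets to be the point set $V$ and another to be
the block set $\mathcal B$, with incidence (point $\in$ block) as the basic
relation. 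Comparing the target parameters against Corollary
\ref{cor-PDA-SFC}, I expect the identification $X=\mathcal B$, $Z=V$ (so
$|X|=b$, $|Z|=v$), with $Y$ playing the role of the ``flag'' or pair object
that witnesses condition E3; the degree constants should come out as
$D_X=k$ (points per block), $D_Z=r$ (blocks per point), matching the
appearances of $r$, $k$, $v$ and $vr/k$ in the statement.

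First I would fix the three matrices explicitly. The subtle point is the
choice of $Y$: conditions E3--E6 require that whenever two of the three
incidence relations hold, the third is essentially forced and the local
bipartite graphs $\mathcal G_z$ are regular, so $Y$ cannot simply be a third
copy of points or blocks. A workable choice is to let $Y$ be the set of
incident point--block pairs (flags) $\{(p,B): p\in B\}$, or alternatively to
exploit that in a configuration each block-pair meets in at most one point;
I would set $\textbf{C}_{X,Z}$ (blocks vs.\ points) to be the incidence
matrix, and define $\textbf{C}_{X,Y}$ and $\textbf{C}_{Y,Z}$ so that an
element $y\in Y$ records a specific incident flag. With this in hand, E1$'$,
E2$'$, E7 are direct counts: every block contains exactly $k$ points and
every point lies in exactly $r$ blocks, giving $D_X=k$, $D_Z=r$ and
$D_Y$ the flag-degree. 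Condition E3 says that each relevant
$(x,y)$ determines a unique $z$, which holds because a flag names a unique
point (or block); E6 asserts regularity of $\mathcal G_z$, which I would
verify by the configuration axioms ensuring a constant number of flags
through each point/block.

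Once E1$'$, E2$'$, E3, E6, E7 are checked, Corollary \ref{cor-PDA-SFC}
immediately yields three PDAs with parameters obtained by cyclically
relabelling $X,Y,Z$. I would then substitute $|X|=b=vr/k$, $|Z|=v$, and the
degrees $D_X=k$, $D_Z=r$ into the three parameter sets of the corollary and
convert each PDA into a caching scheme via Lemma \ref{lem-pda-ccp}, using
$\frac{M}{N}=\frac{Q}{F}$ and $R=\frac{S}{F}$. For instance, parameter set
3) of the corollary gives $K=|Z|=v$ (or the appropriate role), $F=|X|$,
$Q=|X|-D_Z$, from which $\frac{M}{N}=1-\frac{D_Z}{|X|}$ and $R=\frac{|Y|}{F}$;
matching these against the claimed $\frac{M}{N}=1-\frac{k}{v}$, $R=1$,
$F=v$, $K=vr/k$ pins down exactly which relabelling produces which of the
three listed schemes.

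The main obstacle I anticipate is the correct design of the auxiliary set
$Y$ and its two matrices so that all of E3--E6 hold simultaneously; the
incidence matrix $\textbf{C}_{X,Z}$ is forced, but $Y$ must be engineered so
that condition E3 gives \emph{exactly one} witness $z$ and condition E6
gives a \emph{constant} degree in each $\mathcal G_z$. This is precisely
where condition (iii) of Definition \ref{def-Cfg}---any two points in at most
one common block (equivalently, any two blocks meet in at most one
point)---must be used: it is what prevents the ``exactly one'' requirements
from failing and what forces the local graphs to be regular. Verifying the
three parameter identities after relabelling is then routine arithmetic
using $b=vr/k$, so the real content of the proof is the matrix construction
and the E3/E6 verification.
```
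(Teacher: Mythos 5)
Your high-level plan (build $\textbf{C}_{X,Y}$, $\textbf{C}_{X,Z}$, $\textbf{C}_{Y,Z}$ from the configuration, verify E1$'$, E2$'$, E3, E6, E7, then invoke Corollary \ref{cor-PDA-SFC} and Lemma \ref{lem-pda-ccp}) is exactly the paper's, but your concrete instantiation has a genuine gap: the role assignment and the choice of $Y$ are wrong, and no relabelling can rescue the claimed parameters. With your assignment $X=\mathcal B$, $Z=V$ and $Y$ the flag set $\{(p,B):p\in B\}$, you get $|Y|=vr$. The flag construction can indeed be made to satisfy E3 and E6 (with $|N_z(\cdot)|=1$, taking $\textbf{C}_{X,Y}(B,(p,B'))=1$ iff $B=B'$ and $\textbf{C}_{Y,Z}((p,B'),q)=1$ iff $q=p$), so it does yield valid PDAs --- but not the ones claimed. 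Since $S$ is the number of transmissions and the corollary's parameter sets use $S=|Y|$ (or $F=|Y|$), the rates come out as $R=vr/v=r$ and $R=vr/(vr/k)=k$ for sets matching $F=v$ and $F=vr/k$, i.e., worse than the claimed $R=1$ and $R=k/r$ by a factor of $r$; the remaining set gives $F=vr$, $R=1/r$, which is not in the theorem either. In short, $Y$ must be kept small ($|Y|=v$), and your explicit hedge that ``$Y$ cannot simply be a third copy of points or blocks'' is precisely the wrong turn.

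The paper's construction takes $X=Y=V$ (two copies of the point set) and $Z=\mathcal B$, with $\textbf{C}_{X,Z}$ and $\textbf{C}_{Y,Z}$ the point--block incidence matrices and $\textbf{C}_{X,Y}(x,y)=1$ iff $x\neq y$ and $\{x,y\}\subseteq B$ for some block $B$. Condition E3 then holds exactly because two distinct points lie in at most one block; note this forces $Z$, the unique-witness coordinate of E3, to be the block set once $|Y|=v$, so your assignment $Z=V$ cannot be repaired (a block together with an incident point, or two collinear points, does not determine a unique \emph{point}). With this choice, $N_z(x)=z\backslash\{x\}$ gives the constant degree $k-1$ in E6 with no further argument, and the degrees are $D_X=D_Y=r$, $D_Z=k$ (your guesses $D_X=k$, $D_Z=r$ are swapped relative to what the claimed parameters require). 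Substituting $|X|=|Y|=v$, $|Z|=b=vr/k$ into the corollary's three cases yields $(K,F,Q,S)=(v,v,v-r,vr/k)$, $(v,vr/k,vr/k-r,v)$ and $(vr/k,v,v-k,v)$, which are exactly parameter sets 1)--3). You correctly located where the difficulty sits (the design of $Y$ and the E3/E6 verification), but the construction you committed to does not close it.
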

\begin{proof}
Let $(V,\mathcal B)$ be a configuration $(v_{r}, b_k)$. Let
$X=Y=V$ and $Z=\mathcal B$. Then $|X|=|Y|=v$ and
$|Z|=b=\frac{vr}{k}$. We construct three binary matrices
$\textbf{C}_{X,Y}$, $\textbf{C}_{X,Z}$ and $\textbf{C}_{Y,Z}$ as
follows. For every $x\in X$ and $y\in Y$:
\begin{equation*}
\textbf{C}_{X,Y}(x,y)\!=\!\left\{\begin{aligned} &1& \!\text{if}~
x\!\neq\!
y~\text{and}~\{x, y\}\!\subseteq\! B~\text{for some}
~B\!\in\!\mathcal B,\\
&0& \text{otherwise.} ~ ~ ~ ~ ~ ~ ~ ~ ~ ~ ~ ~ ~ ~ ~ ~ ~ ~ ~ ~ ~ ~
~ ~ ~ ~ ~ ~ ~ ~ ~ ~ ~ ~ ~ \\
\end{aligned} \right.
\end{equation*}
Moreover, for every $z\in Z$:
\begin{equation*}
\textbf{C}_{X,Z}(x,z)=\left\{\begin{aligned} &1&
\text{if}~x\in z,~ \\
&0& \text{otherwise,} \\
\end{aligned} \right.
\end{equation*}
and
\begin{equation*}
\textbf{C}_{Y,Z}(y,z)=\left\{\begin{aligned} &1&
\text{if}~y\in z,~ \\
&0& \text{otherwise.} \\
\end{aligned} \right.
\end{equation*}

Note that $(V,\mathcal B)$ is a configuration $(v_{r}, b_k)$. We
can prove that $\textbf{C}_{X,Y}$, $\textbf{C}_{X,Z}$ and
$\textbf{C}_{Y,Z}$ satisfy conditions E1$'$, E2$'$, E3, E6 and E7
of Corollary \ref{cor-PDA-SFC} by the following discussions:
\begin{itemize}
 \item Every block $z\in \mathcal B$ contains $k$ points of $V$, so by the
 construction of $\textbf{C}_{X,Z}$, $D_Z=k$ and Condition E1$'$ is satisfied.
 \item Every point $y$ is contained by $r$ blocks, so by the construction
 of $\textbf{C}_{Y,Z}$, $D_Y=r$ and Condition E2$'$ is satisfied.
 Similarly, $D_X=r$ and Condition E7 is satisfied.
 \item Since in a configuration $(v_{r}, b_k)$, every $2$-subset of $V$
 is contained in at most one block, so for every pair of distinct points
 $\{x,y\}$, if there is a $B\in \mathcal B$ such that $\{x, y\}\subseteq B$,
 then $B$ is the unique block that contains both $x$ and $y$. By the
 construction of $\textbf{C}_{X,Y}$, $\textbf{C}_{X,Z}$ and
 $\textbf{C}_{Y,Z}$, we can see that condition E3 of Corollary
 \ref{cor-PDA-SFC} is satisfied.
 \item By the construction of  $\textbf{C}_{X,Y}$, $\textbf{C}_{X,Z}$
 and $\textbf{C}_{Y,Z}$, for every block $z\in\mathcal B$ and
 every $x\in U_z^{(1)}$, we have $x\in z$ and $N_z(x)=z\backslash\{x\}$;
 Similarly, for every $y\in U_z^{(2)}$, we have $y\in z$ and
 $N_z(y)=z\backslash\{y\}$. So $|N_z(x)|=|N_z(y)|=k-1$ and
 condition E6 is satisfied.
\end{itemize}
By Corollary \ref{cor-PDA-SFC}, there exists a $(K,F,Q,S)$ PDA,
where $(K,F,Q,S)$ can be any of the following three cases:

Case 1. $(K,F,Q,S)=\left(v, v, v-r, \frac{vr}{k}\right)$. Then by
Lemma \ref{lem-pda-ccp}, we can obtain a caching scheme with
$K=F=v$, $\frac{M}{N}=1-\frac{r}{v}$, and the rate
$R=\frac{r}{k}$, which has parameter set 1).

Case 2. $(K,F,Q,S)=\left(v, \frac{vr}{k}, \frac{vr}{k}-r,
v\right)$. Then by Lemma \ref{lem-pda-ccp}, we can obtain a
caching scheme with $K=v$, $\frac{M}{N}=1-\frac{k}{v}$,
$R=\frac{k}{r}$, and $F=\frac{vr}{k}$, which has parameter set 2).

Case 3. $(K,F,Q,S)=\left(\frac{vr}{k}, v, v-k, v\right)$. Then by
Lemma \ref{lem-pda-ccp}, we can obtain a caching scheme with
$K=\frac{vr}{k}$, $\frac{M}{N}=1-\frac{k}{v}$, $R=1$, and $F=v$,
which has parameter set 3).
\end{proof}

Consider the construction in Theorem \ref{CS-cfg} with parameter
set 1). We have $K=F=v$ and $r=K\left(1-\frac{M}{N}\right)$. Since
in a configuration,
$$k\leq \frac{v-1}{r}+1=\frac{K-1}{K\left(1-\frac{M}{N}\right)}+1,$$
then we can obtain \begin{align*}R=\frac{r}{k}\geq
\frac{K\left(1-\frac{M}{N}\right)}
{\frac{K-1}{K\left(1-\frac{M}{N}\right)}+1}
=\frac{K^2\left(1-\frac{M}{N}\right)^2}
{K-1+K\left(1-\frac{M}{N}\right)},\end{align*} where equality
holds if and only if the corresponding configuration is a
$(v,k,1)$-BIBD. Thus, if there exists a $(v,k,1)$-BIBD, then by
Theorem \ref{CS-cfg}, we can obtain a caching scheme with linear
subpacketization $(K=F)$ and, for fixed $\frac{M}{N}$ and large
$K$, we have
\begin{align*}R=\frac{K^2\left(1-\frac{M}{N}\right)^2}
{K-1+K\left(1-\frac{M}{N}\right)}\approx
\frac{K^2\left(1-\frac{M}{N}\right)^2}
{K+K\left(1-\frac{M}{N}\right)}=K\left(1-\frac{M}{N}\right)
\frac{\left(1-\frac{M}{N}\right)}{2-\frac{M}{N}},\end{align*}
which grows linearly with $K$. For the general case, if $r$ and
$v$ are given, then $K,F$ and $\frac{M}{N}$ are determined and to
reduce the rate $R$, we are interested in finding the largest $k$
such that there exists a configuration $(v_{r}, b_k)$, where
$b=\frac{vr}{k}$.

\begin{rem}
We can compare our construction with the rate-optimal caching
scheme in \cite{Maddah-Ali14}. Note that the optimal rate
$R^*=K\left(1-\frac{M}{N}\right) \frac{1}{1+K\frac{M}{N}}$ is
achieved using an exponentially growing subpacketization
$F=\binom{K}{K\frac{M}{N}}$. In comparison, our construction uses
linear subpacketization $($i.e., $F=K)$ and the rate $R$ satisfies
$$\frac{R}{R^*}=\left(1+K\frac{M}{N}\right)
\frac{\left(1-\frac{M}{N}\right)}{2-\frac{M}{N}}.$$ That is, the
rate of our construction differs from the optimal rate by a factor
of $\left(1+K\frac{M}{N}\right)
\frac{\left(1-\frac{M}{N}\right)}{2-\frac{M}{N}}$. Note that by
\cite[Theorem 12]{Chong18}, given $R=\frac{S}{F}$ and
$\frac{M}{N}=\frac{Q}{F}$ independent of $K$, a $(K, F, Q, S)$ PDA
where $F$ grows linearly with $K$ does not exist. So the loss in
$R=K\left(1-\frac{M}{N}\right)
\frac{\left(1-\frac{M}{N}\right)}{2-\frac{M}{N}}$ is reasonable.
\end{rem}


\begin{rem}
We can also compare our constructions in Theorem \ref{CS-cfg} with
some results in \cite{Agrawal19}.
\begin{itemize}
 \item[1)] Note that a $(v,k,1)$-BIBD is a configuration $(v_{r},
 b_k)$ with $r=\frac{v-1}{k-1}$ and
 $b=\frac{vr}{k}=\frac{v(v-1)}{k(k-1)}$. We can easily check that
 \cite[Theorem 2]{Agrawal19} is just a special case of our construction
 in Theorem \ref{CS-cfg} with parameter set 2).
 \item[2)] Note that a TD$(k,n)$, whose definition can be found in
 \cite[Chapter 6]{Stinson}, is a configuration $(v_{r},
 b_k)$ with $r=n$, $v=kn$ and $b=\frac{vr}{k}=n^2$.
 So we can check that \cite[Theorem 6]{Agrawal19} is just a special
 case of our construction in Theorem \ref{CS-cfg} with parameter set 3).
 Moreover, \cite[Theorem 6]{Agrawal19} is proved only
 for $k\geq n$, while our Theorem \ref{CS-cfg} holds for all
 possible $k$ and $n$.
\end{itemize}
\end{rem}

\subsection{Caching Schemes from $t$-$(v,k,1)$-Designs}

We present two constructions of caching schemes from
$t$-$(v,k,1)$-designs. The first construction comes from
$t$-$(v,k,1)$-designs with $t\leq\frac{k}{2}+1$ (see Theorem
\ref{CS-tdsn-1}) and the second construction comes from
$t$-$(v,k,1)$-designs with $\frac{k}{2}+1<t\leq k$ (see Theorem
\ref{CS-tdsn-2}). Constructing caching schemes using $t$-$(v,k,1)$
was also investigated in \cite{Agrawal19}. Our constructions are
different from the constructions in \cite{Agrawal19}, and hence
the resulted caching schemes have different parameter sets.

\begin{thm}\label{CS-tdsn-1}
Suppose there exists a $t$-$(v,k,1)$-design with
$t\leq\frac{k}{2}+1$. Let $t_0$ be a positive integer such that
$\frac{t}{2}\leq t_0\leq t-1$. Then there exists a caching scheme
for each of the following three sets of parameters.
\begin{itemize}
 \item[1)] $K=F=\binom{v}{t_0}$,
 $\frac{M}{N}=1-\frac{\binom{v-t_0}{t-t_0}}{\binom {k-t_0}
 {t-t_0}\binom{v}{t_0}}$, and
 $R=\frac{\binom{v}{t}}{\binom{k}{t}\binom{v}{t_0}}$;
 \vspace{3pt}\item[2)] $K=\binom{v}{t_0}$,
 $\frac{M}{N}=1-\frac{\binom{k}{t_0}}{\binom{v}{t_0}}$,
 $R=\frac{\binom{v}{t_0}\binom
 {k}{t}}{\binom {v}{t}}$, and
 $F=\frac{\binom{v}{t}}{\binom{k}{t}}$;
 \vspace{3pt}\item[3)] $K=\frac{\binom{v}{t}}{\binom {k}{t}}$,
 $\frac{M}{N}=1-\frac{\binom{k}{t_0}}{\binom{v}{t_0}}$,
 $R=1$, and $F=\binom{v}{t_0}$.
\end{itemize}
\end{thm}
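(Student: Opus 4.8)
The plan is to instantiate the three matrices $\textbf{C}_{X,Y}$, $\textbf{C}_{X,Z}$, $\textbf{C}_{Y,Z}$ required by Corollary \ref{cor-PDA-SFC} directly from the given $t$-$(v,k,1)$-design $(V,\mathcal B)$, verify conditions E1$'$, E2$'$, E3, E6 and E7, and then read off the three parameter sets from the three cases of the corollary via Lemma \ref{lem-pda-ccp}. I would take $X=Y=\binom{V}{t_0}$, the set of all $t_0$-subsets of $V$, and $Z=\mathcal B$, so that $|X|=|Y|=\binom{v}{t_0}$ and $|Z|=b=\binom{v}{t}/\binom{k}{t}$, matching the claimed values of $K$, $F$ and $S$. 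The incidence matrices are the natural ones: set $\textbf{C}_{X,Z}(x,z)=\textbf{C}_{Y,Z}(x,z)=1$ exactly when the $t_0$-subset $x$ is contained in the block $z$, and set $\textbf{C}_{X,Y}(x,y)=1$ exactly when $x\cap y=\emptyset$ and $x\cup y$ is contained in some block of $\mathcal B$.

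The routine counts follow from the design parameters. Each block is a $k$-set, hence contains $\binom{k}{t_0}$ distinct $t_0$-subsets, giving $D_Z=\binom{k}{t_0}$ and condition E1$'$. By the block-count formula \eqref{lmd-s-t-des} with $s=t_0\le t$ and $\lambda=1$, every fixed $t_0$-subset lies in $\lambda_{t_0}=\binom{v-t_0}{t-t_0}/\binom{k-t_0}{t-t_0}$ blocks, so $D_X=D_Y=\lambda_{t_0}>0$, which yields E2$'$ and E7. The standard identity $\binom{v}{t}\binom{t}{t_0}=\binom{v}{t_0}\binom{v-t_0}{t-t_0}$ together with its analogue for $k$ lets me rewrite $\lambda_{t_0}/b=\binom{k}{t_0}/\binom{v}{t_0}$, and this is exactly what converts the raw values $Q=F-D_X$ into the stated memory ratios $M/N$ in each of the three parameter sets.

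The only conditions that genuinely use the hypotheses $t/2\le t_0\le t-1$ and $t\le k/2+1$ are E3 and E6, and this is where I expect the real care to be needed. For E3, if $\textbf{C}_{X,Y}(x,y)=1$ then $x\cap y=\emptyset$ forces $|x\cup y|=2t_0\ge t$, so any block containing $x\cup y$ is pinned down by a single $t$-subset of $x\cup y$, which lies in a \emph{unique} block of the Steiner system; combined with the requirement (built into the definition of $\textbf{C}_{X,Y}$) that $x\cup y$ actually be contained in some block, this produces exactly one $z$ with $\textbf{C}_{X,Z}(x,z)=\textbf{C}_{Y,Z}(y,z)=1$, so $t_0\ge t/2$ is precisely the condition making E3 hold. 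For E6, fix a block $z=B$; then $U_z^{(1)}=U_z^{(2)}$ is the set of $t_0$-subsets of $B$, and since the union of any two such subsets automatically lies inside $B$, I get $N_z(x)=\{y'\subseteq B:|y'|=t_0,\ x\cap y'=\emptyset\}$, whence $|N_z(x)|=|N_z(y)|=\binom{k-t_0}{t_0}$ independently of $x$ and $y$. This is positive precisely when $k\ge 2t_0$, which follows from $t_0\le t-1\le k/2$ using $t\le k/2+1$; hence E6 holds.

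With E1$'$, E2$'$, E3, E6 and E7 verified, Corollary \ref{cor-PDA-SFC} produces a $(K,F,Q,S)$ PDA in each of its three cases: case 1 gives $(\binom{v}{t_0},\binom{v}{t_0},\binom{v}{t_0}-\lambda_{t_0},b)$, case 2 gives $(\binom{v}{t_0},b,b-\lambda_{t_0},\binom{v}{t_0})$, and case 3 gives $(b,\binom{v}{t_0},\binom{v}{t_0}-\binom{k}{t_0},\binom{v}{t_0})$. Applying Lemma \ref{lem-pda-ccp} with $M/N=Q/F$ and $R=S/F$, and simplifying with the binomial identity above, yields parameter sets 1), 2) and 3) respectively. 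The main obstacle is isolated entirely in E3 and E6, namely confirming that the interval $t/2\le t_0\le t-1$ (under $t\le k/2+1$) is exactly what simultaneously guarantees uniqueness of the covering block and nonemptiness of the disjointness neighborhoods; everything else is bookkeeping with binomial coefficients.
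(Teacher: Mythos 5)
Your proposal is correct and follows essentially the same route as the paper's proof: the same choice $X=Y=\binom{V}{t_0}$, $Z=\mathcal B$, the same three incidence matrices, the same verification of E1$'$, E2$'$, E3, E6, E7 (including the use of $2t_0\geq t$ for uniqueness in E3 and $2t_0\leq k$ for $\binom{k-t_0}{t_0}\geq 1$ in E6), and the same reduction of $\lambda_{t_0}/b$ to $\binom{k}{t_0}/\binom{v}{t_0}$ when reading off the parameters from Corollary \ref{cor-PDA-SFC} and Lemma \ref{lem-pda-ccp}. The only caveat is your side remark that the hypotheses are ``exactly'' what E3 and E6 require, a necessity claim you do not (and need not) prove; the sufficiency argument itself is complete.
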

\begin{proof}
Let $(V,\mathcal B)$ be a $t$-$(v,k,1)$-design, where $t\leq
\frac{k}{2}+1$. Let $X=Y=\binom{V}{t_0}$ and $Z=\mathcal B$. Then
$|X|=|Y|=\binom{v}{t_0}$ and $|Z|=b=\frac{\binom{v}{t}}{\binom
{k}{t}}$. We construct three binary matrices $\textbf{C}_{X,Y}$,
$\textbf{C}_{X,Z}$ and $\textbf{C}_{Y,Z}$ as follows. For every
$x\in X$ and $y\in Y$:
\begin{equation*}
\textbf{C}_{X,Y}(x,y)\!=\!\left\{\begin{aligned} &1& \!\text{if}~
x\cap y=\emptyset~\text{and}~(x\cup y)\!\subseteq\! B
~\text{for some}~B\!\in\!\mathcal B, \\
\vspace{5pt}&0& \text{otherwise.} ~ ~ ~ ~ ~ ~ ~ ~ ~ ~ ~ ~ ~ ~ ~ ~
~ ~ ~ ~ ~ ~ ~ ~ ~ ~ ~ ~ ~ ~ ~ ~ ~ ~ ~ ~ ~ ~ ~ ~ ~ ~ \\
\end{aligned} \right.
\end{equation*}
Moreover, for every $z\in Z$:
\begin{equation*}
\textbf{C}_{X,Z}(x,z)=\left\{\begin{aligned} &1&
\text{if}~x\subseteq z,~ \\
&0& \text{otherwise,} \\
\end{aligned} \right.
\end{equation*}
and
\begin{equation*}
\textbf{C}_{Y,Z}(y,z)=\left\{\begin{aligned} &1&
\text{if}~y\subseteq z,~ \\
&0& \text{otherwise.} \\
\end{aligned} \right.
\end{equation*}

Note that $(V,\mathcal B)$ is a $t$-$(v,k,1)$-design. We can prove
that $\textbf{C}_{X,Y}$, $\textbf{C}_{X,Z}$ and $\textbf{C}_{Y,Z}$
satisfy conditions E1$'$, E2$'$, E3, E6 and E7 of Corollary
\ref{cor-PDA-SFC} by the following discussions:
\begin{itemize}
 \item Every block $z\in \mathcal B$ is a $k$-subset of $V$, so
 $z$ contains $\binom{k}{t_0}$~ $t_0$-subsets of $V$. By the
 construction of $\textbf{C}_{X,Z}$, $D_Z=\binom{k}{t_0}$ and
 condition E1$'$ is satisfied.
 \item In a $t$-$(v,k,1)$-design, every $t_0$-subset $y$ of $V$ is
 contained by $\lambda_{t_0}$ blocks. So by the construction
 of $\textbf{C}_{Y,Z}$, $D_Y=\lambda_{t_0}=\frac{\binom{v-t_0}{t-t_0}}
 {\binom {k-t_0}{t-t_0}}$ and condition E2$'$ is satisfied.
 Similarly, $D_X=\lambda_{t_0}=\frac{\binom{v-t_0}{t-t_0}}{\binom {k-t_0}
 {t-t_0}}$ and condition E7 is satisfied.
 \item Since $t\leq\frac{k}{2}+1$ and $\frac{t}{2}\leq t_0\leq t-1$,
 if $x$, $y$ are $t_0$-subsets of $V$ such that $x\cap y=\emptyset$,
 then $t\leq|x\cup y|=2t_0\leq 2t-2\leq k$. Noticing that in a
 $t$-$(v,k,1)$-design, every $t$-subset of $V$
 is contained in exactly one block, so if there is a $B\in \mathcal B$
 such that $x\cup y\subseteq B$,
 then $B$ is the unique block that contains both $x$ and $y$. By the
 construction of $\textbf{C}_{X,Y}$, $\textbf{C}_{X,Z}$ and
 $\textbf{C}_{Y,Z}$, we can see that condition E3 of Corollary
 \ref{cor-PDA-SFC} is satisfied.
 \item By the construction of $\textbf{C}_{X,Y}$, $\textbf{C}_{X,Z}$
 and $\textbf{C}_{Y,Z}$, for every block $z\in\mathcal B$ and
 every $x\in U_z^{(1)}$, we have $x\subseteq z$ and
 $N_z(x)=\binom{z\backslash x}{t_0}$;
 Similarly, for every $y\in U_z^{(2)}$, we have $y\subseteq z$ and
 $N_z(y)=\binom{z\backslash y}{t_0}$. So $|N_z(x)|=|N_z(y)|
 =\binom{k-t_0}{t_0}$ and condition E6 is satisfied.
\end{itemize}
By Corollary \ref{cor-PDA-SFC}, there exists a $(K,F,Q,S)$ PDA,
where $(K,F,Q,S)$ can be any of the following three cases:

Case 1. $(K,F,Q,S)=\left(\binom{v}{t_0}, \binom{v}{t_0},
\binom{v}{t_0}-\lambda_{t_0}, b\right)$. Then by Lemma
\ref{lem-pda-ccp}, we can obtain a caching scheme with
$K=F=\binom{v}{t_0}$,
$\frac{M}{N}=1-\frac{\lambda_{t_0}}{\binom{v}{t_0}}
=1-\frac{\binom{v-t_0}{t-t_0}}{\binom {k-t_0}
{t-t_0}\binom{v}{t_0}}$, and the rate
$R=\frac{b}{\binom{v}{t_0}}=\frac{\binom{v}{t}}{\binom
{k}{t}\binom{v}{t_0}}$, which has parameter set 1).

Case 2. $(K,F,Q,S)=\left(\binom{v}{t_0}, b, b-\lambda_{t_0},
\binom{v}{t_0}\right)$. Then by Lemma \ref{lem-pda-ccp}, we can
obtain a caching scheme with $K=\binom{v}{t_0}$,
$\frac{M}{N}=1-\frac{\lambda_{t_0}}{b}
=1-\frac{\binom{v-t_0}{t-t_0}\binom{k}{t}}{\binom {k-t_0}
{t-t_0}\binom{v}{t}}=1-\frac{\binom{k}{t_0}}{\binom{v}{t_0}}$, the
rate $R=\frac{\binom{v}{t_0}}{b}=\frac{\binom{v}{t_0}\binom
{k}{t}}{\binom {v}{t}}$, and $F=b=\frac{\binom{v}{t}}{\binom
{k}{t}}$, which has parameter set 2).

Case 3. $(K,F,Q,S)=\left(b, \binom{v}{t_0},
\binom{v}{t_0}-\binom{k}{t_0}, \binom{v}{t_0}\right)$. Then by
Lemma \ref{lem-pda-ccp}, we can obtain a caching scheme with
$K=b=\frac{\binom{v}{t}}{\binom {k}{t}}$,
$\frac{M}{N}=1-\frac{\binom{k}{t_0}}{\binom{v}{t_0}}$, the rate
$R=\frac{\binom{v}{t_0}}{\binom{v}{t_0}}=1$, and the file size
$F=\binom{v}{t_0}$, which has parameter set 3).
\end{proof}

Consider the caching scheme with parameter set 1) in Theorem
\ref{CS-tdsn-1}. Noticing that $K=\binom{v}{t_0}$ and $\frac{M}{N}
=1-\frac{\binom{v-t_0}{t-t_0}}{\binom {k-t_0}
{t-t_0}\binom{v}{t_0}}$, then we can obtain
\begin{align}\label{eq1-tdsn-1}
K\left(1-\frac{M}{N}\right)=\frac{\binom{v-t_0}{t-t_0}}{\binom
{k-t_0} {t-t_0}}.\end{align} Moreover, since
$R=\frac{\binom{v}{t}}{\binom {k}{t}\binom{v}{t_0}}$, then we can
obtain
\begin{align}\label{eq2-tdsn-1}
KR=\binom{v}{t_0}R=\frac{\binom{v}{t}}{\binom {k}{t}}.\end{align}
From \eqref{eq1-tdsn-1} and \eqref{eq2-tdsn-1}, we have
$$\frac{1-\frac{M}{N}}{R}=\frac{\binom{v-t_0}{t-t_0}}{\binom
{k-t_0}{t-t_0}}\frac{\binom{k}{t}}{\binom
{v}{t}}=\frac{\binom{k}{t_0}}{\binom
{v}{t_0}}=\frac{\binom{k}{t_0}}{K},$$ and hence the delivery rate
$$R=K\left(1-\frac{M}{N}\right)\frac{1}{\binom{k}{t_0}}.$$
Unfortunately, it looks hard to find an expression of
$\binom{k}{t_0}$ in terms of $K$ and $\frac{M}{N}$.

The following theorem gives another construction of coded caching
scheme using $t$-$(v,k,1)$-design.
\begin{thm}\label{CS-tdsn-2}
Suppose there exists a $t$-$(v,k,1)$-design such that
$\frac{k}{2}+1<t\leq k$. Let $t_1, t_2$ be positive integers such
that $\max\{t_1,t_2\}<t$ and $t_1+t_2=k$. Then there exists a
caching scheme for each of the following three sets of parameters.
\begin{itemize}
 \item[1)] $K=\binom{v}{t_1}$,
 $\frac{M}{N}=1-\frac{\binom{v-t_1}{t-t_1}}{\binom {k-t_1}
 {t-t_1}\binom{v}{t_2}}$,
 $R=\frac{\binom{v}{t}}{\binom
 {k}{t}\binom{v}{t_2}}$, and $F=\binom{v}{t_2}$;
 \vspace{3pt}\item[2)] $K=\binom{v}{t_1}$,
 $\frac{M}{N}=1-\frac{\binom{k}{t_1}}{\binom{v}{t_1}}$,
 $R=\frac{\binom{v}{t_2}\binom
 {k}{t}}{\binom {v}{t}}$, and $F=\frac{\binom{v}{t}}{\binom
 {k}{t}}$;
 \vspace{3pt}\item[3)] $K=\frac{\binom{v}{t}}{\binom {k}{t}}$,
 $\frac{M}{N}=1-\frac{\binom{k}{t_1}}{\binom{v}{t_1}}$,
 $R=\frac{\binom{v}{t_2}}{\binom{v}{t_1}}$, and $F=\binom{v}{t_1}$.
\end{itemize}
\end{thm}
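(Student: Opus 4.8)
The plan is to mirror the proof of Theorem \ref{CS-tdsn-1}, but now to let the row- and column-index sets $X$ and $Y$ consist of subsets of \emph{different} cardinalities $t_1$ and $t_2$, arranged so that $t_1+t_2=k$ exactly. First I would fix a $t$-$(v,k,1)$-design $(V,\mathcal B)$ and set $X=\binom{V}{t_1}$, $Y=\binom{V}{t_2}$, and $Z=\mathcal B$, so that $|X|=\binom{v}{t_1}$, $|Y|=\binom{v}{t_2}$, and $|Z|=b=\binom{v}{t}/\binom{k}{t}$. (Such $t_1,t_2$ exist under $\frac{k}{2}+1<t$, e.g.\ $t_1=k-t+1$, $t_2=t-1$.) I would then define the three binary matrices exactly as before: $\textbf{C}_{X,Y}(x,y)=1$ iff $x\cap y=\emptyset$ and $x\cup y\subseteq B$ for some $B\in\mathcal B$; $\textbf{C}_{X,Z}(x,z)=1$ iff $x\subseteq z$; and $\textbf{C}_{Y,Z}(y,z)=1$ iff $y\subseteq z$. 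The goal is to verify conditions E1$'$, E2$'$, E3, E6, E7 of Corollary \ref{cor-PDA-SFC} and then read off the three parameter sets.

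Next I would compute the degree parameters. Each block is a $k$-set and hence contains $\binom{k}{t_1}$ subsets of size $t_1$, giving E1$'$ with $D_Z=\binom{k}{t_1}$. Because $t_1<t$ and $t_2<t$, formula \eqref{lmd-s-t-des} (with $\lambda=1$) applies, so every $t_1$-subset lies in $\lambda_{t_1}=\binom{v-t_1}{t-t_1}/\binom{k-t_1}{t-t_1}$ blocks and every $t_2$-subset lies in $\lambda_{t_2}$ blocks; this yields E7 with $D_X=\lambda_{t_1}$ and E2$'$ with $D_Y=\lambda_{t_2}$.

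The decisive observation, and the one place where the hypothesis $t_1+t_2=k$ (rather than $2t_0\le k$ in Theorem \ref{CS-tdsn-1}) really bites, is that whenever $x$ and $y$ are disjoint one has $|x\cup y|=t_1+t_2=k$. Hence if $x\cup y$ is contained in a block $B$, then $|B|=k$ forces $B=x\cup y$: the union is itself the \emph{unique} block containing both $x$ and $y$, which gives E3. For E6, fix a block $z$, a $t_1$-subset $x\subseteq z$, and a $t_2$-subset $y\subseteq z$. Any $y'\in U_z^{(2)}$ with $x\cap y'=\emptyset$ satisfies $x\cup y'=z$ by the same size count, so $\textbf{C}_{X,Y}(x,y')=1$ holds automatically with witness block $z$; since $z\setminus x$ is the only $t_2$-subset of $z$ disjoint from $x$, we get $N_z(x)=\{z\setminus x\}$, and symmetrically $N_z(y)=\{z\setminus y\}$. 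Thus $|N_z(x)|=|N_z(y)|=1\ge 1$, verifying E6 (here the graph $\mathcal G_z$ is simply $1$-regular, with perfect matching $x\mapsto z\setminus x$; note the two sides have equal size $\binom{k}{t_1}=\binom{k}{t_2}$).

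Finally I would invoke Corollary \ref{cor-PDA-SFC} to obtain a $(K,F,Q,S)$ PDA for each of its three parameter sets and convert each to a caching scheme via Lemma \ref{lem-pda-ccp}. Substituting $|X|=\binom{v}{t_1}$, $|Y|=\binom{v}{t_2}$, $|Z|=b$, $D_X=\lambda_{t_1}$, $D_Z=\binom{k}{t_1}$ reproduces the three claimed parameter sets directly. The step needing the most care is the algebraic simplification in sets 2) and 3): there $\frac{M}{N}=1-\lambda_{t_1}/b$ must be shown to equal $1-\binom{k}{t_1}/\binom{v}{t_1}$, which follows from the standard double-counting identity $\lambda_{t_1}\binom{v}{t_1}=b\binom{k}{t_1}$. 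I do not expect a genuine obstacle beyond this bookkeeping; the structural heart of the argument is the size count $|x\cup y|=k$, which makes E3 and E6 fall out almost for free.
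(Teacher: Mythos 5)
Your proposal is correct and takes essentially the same route as the paper's proof: the same sets $X=\binom{V}{t_1}$, $Y=\binom{V}{t_2}$, $Z=\mathcal B$, the same three matrices, the same verification of E1$'$, E2$'$, E3, E6, E7 (with $D_Z=\binom{k}{t_1}$, $D_X=\lambda_{t_1}$, $D_Y=\lambda_{t_2}$, and $N_z(x)=\{z\setminus x\}$, $N_z(y)=\{z\setminus y\}$ giving $|N_z(x)|=|N_z(y)|=1$), followed by the same application of Corollary \ref{cor-PDA-SFC} and Lemma \ref{lem-pda-ccp} with the identity $\lambda_{t_1}\binom{v}{t_1}=b\binom{k}{t_1}$. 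Your definition of $\textbf{C}_{X,Y}$ via ``$x\cup y\subseteq B$ for some $B\in\mathcal B$'' is superficially weaker than the paper's ``$x\cup y\in\mathcal B$'', but as you correctly observe, the size count $|x\cup y|=t_1+t_2=k$ forces $B=x\cup y$, so the two definitions coincide.
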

\begin{proof}
Let $(V,\mathcal B)$ be a $t$-$(v,k,1)$-design, where
$\frac{k}{2}+1<t\leq k$. Let $X=\binom{V}{t_1}$,
$Y=\binom{V}{t_2}$ and $Z=\mathcal B$, where $\max\{t_1,t_2\}<t$
and $t_1+t_2=k$. Then $|X|=\binom{v}{t_1}$, $|Y|=\binom{v}{t_2}$,
and $|Z|=b=\frac{\binom{v}{t}}{\binom {k}{t}}$. We construct three
binary matrices $\textbf{C}_{X,Y}$, $\textbf{C}_{X,Z}$ and
$\textbf{C}_{Y,Z}$ as follows. For every $x\in X$ and $y\in Y$:
\begin{equation}\label{eq1-CS-tdsn-2}
\textbf{C}_{X,Y}(x,y)=\left\{\begin{aligned} &1& \!\text{if}~
x\cap y=\emptyset~\text{and}~(x\cup y)\in\!\mathcal B, \\
\vspace{5pt}&0& \text{otherwise.} ~ ~ ~ ~ ~ ~ ~ ~ ~ ~ ~ ~ ~ ~ ~ ~
~ ~ ~ ~ ~ ~ ~ \\
\end{aligned} \right.
\end{equation}
Moreover, for every $z\in Z$:
\begin{equation}\label{eq2-CS-tdsn-2}
\textbf{C}_{X,Z}(x,z)=\left\{\begin{aligned} &1&
\text{if}~x\subseteq z,~ \\
&0& \text{otherwise,} \\
\end{aligned} \right.
\end{equation}
and
\begin{equation}\label{eq3-CS-tdsn-2}
\textbf{C}_{Y,Z}(y,z)=\left\{\begin{aligned} &1&
\text{if}~y\subseteq z,~ \\
&0& \text{otherwise.} \\
\end{aligned} \right.
\end{equation}
We can verify that conditions E1$'$, E2$'$, E3, E6 and E7 of
Corollary \ref{cor-PDA-SFC} are satisfied as follows.
\begin{itemize}
 \item Note that for a $t$-$(v,k,1)$-design $V,\mathcal B)$,
 every block $z\in\mathcal B$ is a $k$-subset of $V$, and so there
 are $\binom{k}{t_1}$ $t_1$-subsets $x$ of $V$ that are contained by $z$.
 By the construction in
 \eqref{eq2-CS-tdsn-2}, $D_Z=\binom{k}{t_1}$ and E1$'$ is satisfied.
 \item For a $t$-$(v,k,1)$-design $(V,\mathcal B)$, by \eqref{lmd-s-t-des},
 every $t_2$-subset $y$ of $V$ is contained by $\lambda_{t_2}$ blocks. So
 by the construction in \eqref{eq3-CS-tdsn-2}, we have
 $D_Y=\lambda_{t_2}=\frac{\binom{v-t_2}{t-t_2}}{\binom
 {k-t_2}{t-t_2}}$ and E2$'$ is satisfied.
 Similarly, for every $x\in X=\binom{V}{t_1}$, we have
 $D_X=\lambda_{t_1}=\frac{\binom{v-t_1}{t-t_1}}{\binom
 {k-t_1}{t-t_1}}$ is the number of blocks in $\mathcal B$ that contain
 $x$, and so E7 is satisfied.
 \item Note that $t_1+t_2=k$. Then in a $t$-$(v,k,1)$-design
 $(V,\mathcal B)$,
 for every $t_1$-subset $x$ of $V$ and every $t_2$-subset $y$ of
 $V$ such that $x\cap y=\emptyset$, $z=x\cup y$ is the unique
 block in $\mathcal B$ that contains both $x$ and $y$.
 By \eqref{eq1-CS-tdsn-2}$-$\eqref{eq3-CS-tdsn-2}, for every
 $x\in X=\binom{V}{t_1}$ and $y\in Y=\binom{V}{t_2}$
 such that $\textbf{C}_{X,Y}(x,y)=1$, $z=x\cup y$ is the
 unique $z\in Z=\mathcal B$ such that $\textbf{C}_{X,Z}(x,z)=
 \textbf{C}_{Y,Z}(y,z)=1$. So condition E3 is satisfied.
 \item Also note that $t_1+t_2=k$. Then for every fixed block
 $z\in \mathcal B$ and every $t_1$-subset
 $x$ of $z$, $y'=z\backslash x$ is the unique $t_2$-subset of $z$ such
 that $x\cap y'=\emptyset$. By \eqref{eq1-CS-tdsn-2}$-$\eqref{eq3-CS-tdsn-2},
 for every fixed $z\in Z=\mathcal B$ and every $x\in U^{(1)}_z$,
 $N_z(x)=\{y'=z\backslash x\}$. Similarly, for every $y\in U^{(2)}_z$,
 $N_z(y)=\{x'=z\backslash y\}$. So $|N_z(x)|=|N_z(y)|=1$, and hence
 condition E6 is satisfied.
\end{itemize}
By Corollary \ref{cor-PDA-SFC}, there exists a $(K,F,Q,S)$ PDA,
where $(K,F,Q,S)$ can be any of the following three cases:

Case 1. $(K,F,Q,S)=\left(\binom{v}{t_1}, \binom{v}{t_2},
\binom{v}{t_2}-\lambda_{t_1}, b\right)$. Then by Lemma
\ref{lem-pda-ccp}, we can obtain a caching scheme with
$K=\binom{v}{t_1}$,
$\frac{M}{N}=1-\frac{\lambda_{t_1}}{\binom{v}{t_2}}
=1-\frac{\binom{v-t_1}{t-t_1}}{\binom {k-t_1}
{t-t_1}\binom{v}{t_2}}$, the rate
$R=\frac{b}{\binom{v}{t_2}}=\frac{\binom{v}{t}}{\binom
{k}{t}\binom{v}{t_2}}$, and the file size $F=\binom{v}{t_2}$,
which has parameter set 1).

Case 2. $(K,F,Q,S)=\left(\binom{v}{t_1}, b, b-\lambda_{t_1},
\binom{v}{t_2}\right)$. Then by Lemma \ref{lem-pda-ccp}, we can
obtain a caching scheme with $K=\binom{v}{t_1}$,
$\frac{M}{N}=1-\frac{\lambda_{t_1}}{b}
=1-\frac{\binom{v-t_1}{t-t_1}\binom{k}{t}}{\binom {k-t_1}
{t-t_1}\binom{v}{t}}=1-\frac{\binom{k}{t_1}}{\binom{v}{t_1}}$, the
rate $R=\frac{\binom{v}{t_2}}{b}=\frac{\binom{v}{t_2}\binom
{k}{t}}{\binom {v}{t}}$, and $F=b=\frac{\binom{v}{t}}{\binom
{k}{t}}$, which has parameter set 2).

Case 3. $(K,F,Q,S)=\left(b, \binom{v}{t_1},
\binom{v}{t_1}-\binom{k}{t_1}, \binom{v}{t_2}\right))$. Then by
Lemma \ref{lem-pda-ccp}, we can obtain a caching scheme with
$K=b=\frac{\binom{v}{t}}{\binom {k}{t}}$,
$\frac{M}{N}=1-\frac{\binom{k}{t_1}}{\binom{v}{t_1}}$, the rate
$R=\frac{\binom{v}{t_2}}{\binom{v}{t_1}}$, and the file size
$F=\binom{v}{t_1}$, which has parameter set 3).
\end{proof}

\begin{rem}
Note that the $t$-$(v,k,1)$-design with $t=k$ always exists. In
fact, it is just the design $(V,\mathcal B)$ such that $\mathcal
B$ is the set of all $k$-subsets of $V$. It is easy to see that
the construction in Theorem \ref{CS-tdsn-2} with parameter set 1)
for the special case of $t=k$ coincides with the construction of
\cite[Theorem 14]{Chong18}. Hence, our construction in Theorem
\ref{CS-tdsn-2} includes the construction \cite[Theorem
14]{Chong18} as a special case.
\end{rem}

\subsection{Caching Schemes from $t$-$(v,k,\lambda)$-Designs}

We give a construction method of coded caching scheme based on PDA
design using $t$-$(v,k,\lambda)$-designs with $\lambda\geq 1$,
which is unlike Theorem \ref{CS-tdsn-1} and Theorem
\ref{CS-tdsn-2}, where $\lambda=1$.

\begin{thm}\label{CS-tdsn-3}
Suppose there exists a $t$-$(v,k,\lambda)$ design with
$\lambda\geq 1$. Consider positive integers $t_0, t_1$ and $t_2$
such that $t_0=t_1+t_2\leq t$. Then there exists a caching scheme
for each of the following three sets of parameters.
\begin{itemize}
 \item[1)] $K=\frac{\lambda\binom{v}{t_1}\binom{v-t_1}{t-t_1}}
 {\binom{k-t_1}{t-t_1}}$,
 $\frac{M}{N}=1-\frac{\binom{k-t_1}{t_2}\binom{k-t_2}{t-t_2}}
{\lambda\binom{v}{t_2}\binom{v-t_2}{t-t_2}}$,
 $R=\frac{\binom{k}{t_0}}{\binom{k}{t_2}}$, and
 $F=\frac{\lambda\binom{v}{t_2}\binom{v-t_2}{t-t_2}}
 {\binom{k-t_2}{t-t_2}}$;
 \vspace{3pt}\item[2)] $K=\frac{\lambda\binom{v}{t_1}\binom{v-t_1}{t-t_1}}
 {\binom{k-t_1}{t-t_1}}$,
 $\frac{M}{N}=1-\frac{\binom{k-t_1}{t_2}\binom{k-t_0}{t-t_0}}
{\lambda\binom{v}{t_0}\binom{v-t_0}{t-t_0}}$,
 $R=\frac{\binom{k}{t_2}}{\binom{k}{t_0}}$, and
 $F=\frac{\lambda\binom{v}{t_0}\binom{v-t_0}{t-t_0}}
 {\binom{k-t_0}{t-t_0}}$;
 \vspace{3pt}\item[3)] $K=\frac{\lambda\binom{v}{t_0}\binom{v-t_0}{t-t_0}}
 {\binom{k-t_0}{t-t_0}}$,
 $\frac{M}{N}=1-\frac{\binom{t_0}{t_1}\binom{k-t_1}{t-t_1}}
{\lambda\binom{v}{t_1}\binom{v-t_1}{t-t_1}}$,
 $R=\frac{\binom{k}{t_2}}{\binom{k}{t_1}}$, and
 $F=\frac{\lambda\binom{v}{t_1}\binom{v-t_1}{t-t_1}}
 {\binom{k-t_1}{t-t_1}}$.
\end{itemize}
\end{thm}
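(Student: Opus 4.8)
The plan is to follow the same strategy used in the proofs of Theorems \ref{CS-cfg}, \ref{CS-tdsn-1} and \ref{CS-tdsn-2}: construct three binary matrices from the $t$-$(v,k,\lambda)$-design and verify conditions E1$'$, E2$'$, E3, E6 and E7 of Corollary \ref{cor-PDA-SFC}, then read off the three parameter sets. The essential new feature here is that $Z$ is no longer taken to be the block set $\mathcal B$; instead, since $t_0=t_1+t_2\leq t$ with $t_1,t_2$ possibly much smaller than $k$, the natural ground sets are indexed by incidences of small subsets with blocks. Concretely, let $(V,\mathcal B)$ be the design, and I would set $X$ to be the collection of pairs (or flags) $(A,B)$ where $B\in\mathcal B$ and $A\in\binom{B}{t_1}$, let $Y$ be the pairs $(A',B)$ with $A'\in\binom{B}{t_2}$, and let $Z$ be the pairs $(A_0,B)$ with $A_0\in\binom{B}{t_0}$. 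Using the counting formula \eqref{lmd-s-t-des} for $\lambda_s$, the sizes come out to $|X|=\lambda_{t_1}\binom{v}{t_1}=\frac{\lambda\binom{v}{t_1}\binom{v-t_1}{t-t_1}}{\binom{k-t_1}{t-t_1}}$, and similarly for $|Y|$ and $|Z|$, which already matches the stated values of $K$, $F$ and $S$ in the three cases.

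With these ground sets fixed, I would define the incidence matrices so that two flags are ``linked'' when they come from a common block and their small subsets are disjoint and union correctly. Precisely, $\textbf{C}_{X,Z}((A,B),(A_0,B'))=1$ iff $B=B'$ and $A\subseteq A_0$; $\textbf{C}_{Y,Z}((A',B),(A_0,B'))=1$ iff $B=B'$ and $A'\subseteq A_0$; and $\textbf{C}_{X,Y}((A,B),(A',B'))=1$ iff $B=B'$, $A\cap A'=\emptyset$ and $A\cup A'$ is a $t_0$-subset of $B$. The key point making the verifications work is that the block $B$ is shared across a linked triple, so all the incidence counts reduce to binomial counts inside a single fixed $k$-set $B$ together with the design's replication numbers $\lambda_s$. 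For E1$'$, given $(A_0,B)\in Z$ the number of $(A,B)\in X$ with $A\subseteq A_0$ is $\binom{t_0}{t_1}$; for E7 the degree $D_X$ counts, for a fixed $(A,B)$, the $(A_0,B)$ with $A\subseteq A_0$, namely $\binom{k-t_1}{t_0-t_1}=\binom{k-t_1}{t_2}$; E2$'$ is analogous with $D_Y=\binom{k-t_2}{t_1}$. Condition E3 holds because, given a linked pair $(A,B),(A',B)$ with $A\cup A'$ a $t_0$-set, the unique $z$ is $(A\cup A',B)$. Condition E6 holds because inside a fixed $(A_0,B)$ the bipartite graph between the $t_1$-subsets and $t_2$-subsets of $A_0$ that partition $A_0$ is regular of degree $\binom{t_0}{t_1}$ (each $t_1$-subset $A$ has complement determining a $t_2$-subset, and the counts balance), so $|N_z(x)|=|N_z(y)|\geq 1$.

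Having verified the five conditions, Corollary \ref{cor-PDA-SFC} immediately yields three PDAs with parameters $(|X|,|Y|,|Y|-D_X,|Z|)$, $(|X|,|Z|,|Z|-D_X,|Y|)$ and $(|Z|,|X|,|X|-D_Z,|Y|)$; substituting the computed degrees and sizes and applying Lemma \ref{lem-pda-ccp} (with $\frac{M}{N}=\frac{Q}{F}=1-\frac{D}{F}$ and $R=\frac{S}{F}$) produces the three parameter sets, where the stated $\frac{M}{N}$ expressions follow after simplifying $\lambda_{t_1}/|Y|$, $\lambda_{t_1}/|Z|$ and $\binom{t_0}{t_1}/|X|$ using \eqref{lmd-s-t-des}, and the rates simplify to ratios of binomial coefficients such as $\binom{k}{t_0}/\binom{k}{t_2}$.

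The main obstacle I anticipate is getting the ground sets and incidence definitions exactly right so that the resulting sizes and degrees reproduce the somewhat asymmetric formulas in the theorem, particularly the cross terms like $\frac{\binom{k-t_1}{t_2}\binom{k-t_2}{t-t_2}}{\lambda\binom{v}{t_2}\binom{v-t_2}{t-t_2}}$ and $\frac{\binom{t_0}{t_1}\binom{k-t_1}{t-t_1}}{\lambda\binom{v}{t_1}\binom{v-t_1}{t-t_1}}$. These mixed numerators strongly suggest that the $\frac{M}{N}$ in each case is $1-D_X/F$ (resp.\ $1-D_Z/F$) with $D_X,D_Z$ being the \emph{flag} degrees $\binom{k-t_1}{t_2}$ and $\binom{t_0}{t_1}$, and that the verification must carefully distinguish when the disjointness/union constraint is used versus plain containment. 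The conceptual content is routine once the right flag structure is identified, but the bookkeeping of which binomial belongs to the degree and which to the replication number $\lambda_s$ is where an error is most likely to creep in, so I would verify E1$'$, E2$'$, E7 and E6 by explicit small-case counting before committing to the final algebraic simplifications.
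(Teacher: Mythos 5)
Your construction is exactly the paper's: the same flag sets $X$, $Y$, $Z$ (your redundant extra requirement that $A\cup A'$ be a $t_0$-subset of $B$ is automatic, since disjoint $t_1$- and $t_2$-subsets of a block union to a $t_0$-subset of it), the same three incidence matrices, the same degrees $D_X=\binom{k-t_1}{t_2}$, $D_Y=\binom{k-t_2}{t_1}$, $D_Z=\binom{t_0}{t_1}$, and the same application of Corollary \ref{cor-PDA-SFC} and Lemma \ref{lem-pda-ccp}. One slip: the bipartite graph in E6 is not $\binom{t_0}{t_1}$-regular but $1$-regular --- since $|\alpha_0|=t_0=t_1+t_2$, a $t_1$-subset of $\alpha_0$ is disjoint from exactly one $t_2$-subset of $\alpha_0$ (its complement), so $N_z(x)$ and $N_z(y)$ are singletons, as your own parenthetical already describes --- which still satisfies E6 and changes nothing in the final parameters.
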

\begin{proof}
Let $(V,\mathcal B)$ be a $t$-$(v,k,\lambda)$ design, where
$\lambda\geq 1$. For each $\alpha\subseteq V$ of size
$|\alpha|\leq t$, we use $\mathcal B_{\alpha}$ to denote the
collection of blocks containing $\alpha$. Let
$$X=\left\{(\alpha_1,B_1): \alpha_1\in\binom{V}{t_1},
B_1\in\mathcal B_{\alpha_1}\right\},$$
$$Y=\left\{(\alpha_2,B_2): \alpha_2\in\binom{V}{t_2},
B_2\in\mathcal B_{\alpha_2}\right\},$$ and
$$Z=\left\{(\alpha_0,B_0): \alpha_0\in\binom{V}{t_0},
B_0\in\mathcal B_{\alpha_0}\right\}.$$ Then by
\eqref{lmd-s-t-des},
\begin{align*}|X|=\binom{v}{t_1}\lambda_{t_1}
=\binom{v}{t_1}\frac{\lambda\binom{v-t_1}{t-t_1}}{\binom {k-t_1}
{t-t_1}},\end{align*}
\begin{align*}|Y|=\binom{v}{t_2}\lambda_{t_2}
=\binom{v}{t_2}\frac{\lambda\binom{v-t_2}{t-t_2}}{\binom {k-t_2}
{t-t_2}},\end{align*} and
\begin{align*}|Z|=\binom{v}{t_0}\lambda_{t_0}
=\binom{v}{t_0}\frac{\lambda\binom{v-t_0}{t-t_0}}{\binom {k-t_0}
{t-t_0}}.\end{align*}

We can further construct three binary matrices $\textbf{C}_{X,Y}$,
$\textbf{C}_{X,Z}$ and $\textbf{C}_{Y,Z}$ as follows. For every
$x=(\alpha_1,B_1)\in X$ and every $y=(\alpha_2,B_2)\in Y$:
\begin{equation*}
\textbf{C}_{X,Y}(x,y)=\left\{\begin{aligned} &1&
\text{if}~\alpha_1\cap\alpha_2=\emptyset ~\text{and}~ B_1=B_2, \\
&0& \text{otherwise.} ~ ~ ~ ~ ~ ~ ~ ~ ~ ~ ~ ~ ~ ~ ~ ~ ~ ~ ~ ~ ~ ~ ~ \\
\end{aligned} \right.
\end{equation*}
Moreover, for every $z=(\alpha_0,B_0)\in Z$:
\begin{equation*}
\textbf{C}_{X,Z}(x,z)=\left\{\begin{aligned} &1&
\text{if}~\alpha_1\subseteq\alpha_0, ~\text{and}~ B_1=B_0, \\
&0& \text{otherwise,} ~ ~ ~ ~ ~ ~ ~ ~ ~ ~ ~ ~ ~ ~ ~ ~ ~ ~ ~ ~ \\
\end{aligned} \right.
\end{equation*}
and
\begin{equation*}
\textbf{C}_{Y,Z}(y,z)=\left\{\begin{aligned} &1&
\text{if}~\alpha_2\subseteq\alpha_0, ~\text{and}~ B_2=B_0, \\
&0& \text{otherwise.} ~ ~ ~ ~ ~ ~ ~ ~ ~ ~ ~ ~ ~ ~ ~ ~ ~ ~ ~ ~ \\
\end{aligned} \right.
\end{equation*}
The following discussions show that $\textbf{C}_{X,Y}$,
$\textbf{C}_{X,Z}$ and $\textbf{C}_{Y,Z}$ satisfy conditions
E1$'$, E2$'$, E3, E6 and E7 of Corollary \ref{cor-PDA-SFC}:
\begin{itemize}
 \item Note that $t_0=t_1+t_2$.
 Given a $t_1$-subset $\alpha_1$ of $V$, a $t_2$-subset
 $\alpha_2$ of $V$, and a block
 $B\in\mathcal B$ such that $\alpha_1\cap\alpha_2=\emptyset$,
 $\alpha_1\subseteq B$ and $\alpha_2\subseteq B$, there exists
 exactly one $t_0$-subset $\alpha_0$ of $V$, i.e.,
 $\alpha_0=\alpha_1\cup\alpha_2$, such that
 $\alpha_1\subseteq\alpha_0\subseteq B$ and
 $\alpha_2\subseteq\alpha_0\subseteq B$.
 So by the construction of $\textbf{C}_{X,Y}$,
 $\textbf{C}_{X,Z}$ and $\textbf{C}_{Y,Z}$, condition E3 of
 Corollary \ref{cor-PDA-SFC} is satisfied.
 \item Given a $t_1$-subset $\alpha_1$ of $V$, a $t_0$-subset
 $\alpha_0$ of $V$, and a block
 $B\in\mathcal B$ such that $\alpha_1\subseteq\alpha_0\subseteq B$,
 there exists exactly one $t_2$-subset $\alpha_2$ of $V$, i.e.,
 $\alpha_2=\alpha_0\backslash\alpha_1$, such that
 $\alpha_1\cap\alpha_2=\emptyset$ and
 $\alpha_2\subseteq\alpha_0\subseteq B$.
 By the construction of
 $\textbf{C}_{X,Y}$, $\textbf{C}_{X,Z}$ and $\textbf{C}_{Y,Z}$,
 we have
 $N_z(x)=\{(\alpha_0\backslash\alpha_1,B)\}$ for every
 $z=(\alpha_0,B)\in Z$ and every $x=(\alpha_1,B)\in U^{(1)}_z$.
 Similarly, we can obtain
 $N_z(y)=\{(\alpha_0\backslash\alpha_2,B)\}$ for every
 $z=(\alpha_0,B)\in Z$ and every
 $y=(\alpha_2,B)\in U^{(2)}_z$. So $|N_z(x)|=|N_z(y)|=1$, and hence
 condition E6 is satisfied.
 \item For every $\alpha_1\in\binom{V}{t_1}$ and
 $B_1\in\mathcal B_{\alpha_1}$, there is a one-to-one correspondence
 between the $t_2$-subsets of $B_1\backslash\alpha_1$ and the
 $t_0$-subsets of $B_1$ that contain $\alpha_1~($noticing that
 $t_0=t_1+t_2)$.
 So there are $\binom{k-t_1}{t_2}$ $t_0$-subsets $\alpha_0$ of $V$
 such that $\alpha_1\subseteq\alpha_0\subseteq B_1$.
 Hence, by the construction of $\textbf{C}_{X,Z}$, condition E7
 is satisfied and $D_X=\binom{k-t_1}{t_2}.$
 Similarly, by the construction of $\textbf{C}_{Y,Z}$,
 we can verify that condition E2$'$ is satisfied and $D_Y=\binom{k-t_2}{t_1}.$
 \item For every $\alpha_0\in\binom{V}{t_0}$ and
 $B_0\in\mathcal B_{\alpha_0}$,
 there are $\binom{t_0}{t_1}$~ $t_1$-subsets $\alpha_1$ of $V$
 such that $\alpha_1\subseteq\alpha_0\subseteq B_0$.
 So by the construction of $\textbf{C}_{X,Z}$, condition E1$'$
 is satisfied and $D_Z=\binom{t_0}{t_1}.$
\end{itemize}
By Corollary \ref{cor-PDA-SFC}, there exists a $(K,F,Q,S)$ PDA,
where $(K,F,Q,S)$ can be any of the following three cases.

Case 1. $(K,F,Q,S)=\left(|X|,|Y|,|Y|-D_X,|Z|\right)$. Then by
Lemma \ref{lem-pda-ccp}, we can obtain a caching scheme with
\begin{align*}
K=|X|=\binom{v}{t_1}\lambda_{t_1}
=\frac{\lambda\binom{v}{t_1}\binom{v-t_1}{t-t_1}}{\binom{k-t_1}{t-t_1}},
\end{align*}
\begin{align*}
F=|Y|=\binom{v}{t_2}\lambda_{t_2}
=\frac{\lambda\binom{v}{t_2}\binom{v-t_2}{t-t_2}}{\binom{k-t_2}{t-t_2}},
\end{align*}
\begin{align*}
\frac{M}{N}=\frac{Q}{F}
=1-\frac{\binom{k-t_1}{t_2}}{\binom{v}{t_2}\lambda_{t_2}}
=1-\frac{\binom{k-t_1}{t_2}\binom{k-t_2}{t-t_2}}
{\lambda\binom{v}{t_2}\binom{v-t_2}{t-t_2}},
\end{align*}
and
\begin{align*}
R=\frac{S}{F}
=\frac{\binom{v}{t_0}\lambda_{t_0}}{\binom{v}{t_2}\lambda_{t_2}}
=\frac{\binom{v}{t_0}\binom{v-t_0}{t-t_0}\binom{k-t_2}{t-t_2}}
{\binom{v}{t_2}\binom{k-t_0}{t-t_0}\binom{v-t_2}{t-t_2}}=
\frac{\binom{k}{t_0}}{\binom{k}{t_2}},
\end{align*}
which has parameter set 1).

Case 2. $(K,F,Q,S)=\left(|X|,|Z|,|Z|-D_X,|Y|\right)$. Then by
Lemma \ref{lem-pda-ccp}, we can obtain a caching scheme with
\begin{align*}
K=|X|=\binom{v}{t_1}\lambda_{t_1}
=\frac{\lambda\binom{v}{t_1}\binom{v-t_1}{t-t_1}}{\binom{k-t_1}{t-t_1}},
\end{align*}
\begin{align*}
F=|Z|=\binom{v}{t_0}\lambda_{t_0}
=\frac{\lambda\binom{v}{t_0}\binom{v-t_0}{t-t_0}}{\binom{k-t_0}{t-t_0}},
\end{align*}
\begin{align*}
\frac{M}{N}=\frac{Q}{F}
=1-\frac{\binom{k-t_1}{t_2}}{\binom{v}{t_0}\lambda_{t_0}}
=1-\frac{\binom{k-t_1}{t_2}\binom{k-t_0}{t-t_0}}
{\lambda\binom{v}{t_0}\binom{v-t_0}{t-t_0}},
\end{align*}
and
\begin{align*}
R=\frac{S}{F}
=\frac{\binom{v}{t_2}\lambda_{t_2}}{\binom{v}{t_0}\lambda_{t_0}}
=\frac{\binom{v}{t_2}\binom{k-t_0}{t-t_0}\binom{v-t_2}{t-t_2}}
{\binom{v}{t_0}\binom{v-t_0}{t-t_0}\binom{k-t_2}{t-t_2}} =
\frac{\binom{k}{t_2}}{\binom{k}{t_0}},
\end{align*}
which has parameter set 2).

Case 3. $(K,F,Q,S)=\left(|Z|,|X|,|X|-D_Z,|Y|\right)$. Then by
Lemma \ref{lem-pda-ccp}, we can obtain a caching scheme with
\begin{align*}
K=|Z|=\binom{v}{t_0}\lambda_{t_0}
=\frac{\lambda\binom{v}{t_0}\binom{v-t_0}{t-t_0}}{\binom{k-t_0}{t-t_0}},
\end{align*}
\begin{align*}
F=|X|=\binom{v}{t_1}\lambda_{t_1}
=\frac{\lambda\binom{v}{t_1}\binom{v-t_1}{t-t_1}}{\binom{k-t_1}{t-t_1}},
\end{align*}
\begin{align*}
\frac{M}{N}=\frac{Q}{F}
=1-\frac{\binom{t_0}{t_1}}{\binom{v}{t_1}\lambda_{t_1}}
=1-\frac{\binom{t_0}{t_1}\binom{k-t_1}{t-t_1}}
{\lambda\binom{v}{t_1}\binom{v-t_1}{t-t_1}},
\end{align*}
and
\begin{align*}
R=\frac{S}{F}
=\frac{\binom{v}{t_2}\lambda_{t_2}}{\binom{v}{t_1}\lambda_{t_1}}
=\frac{\binom{v}{t_2}\binom{k-t_1}{t-t_1}\binom{v-t_2}{t-t_2}}
{\binom{v}{t_1}\binom{v-t_1}{t-t_1}\binom{k-t_2}{t-t_2}}=
\frac{\binom{k}{t_2}}{\binom{k}{t_1}},
\end{align*}
which has parameter set 3).
\end{proof}

\section{Direct Product of PDAs}
In this section, we give a method to construct new PDA from known
PDAs using the direct product operation. This direct product
method allows us to construct more caching schemes from the
existing results.

\begin{thm}\label{CS-Drt-Prd}
If there exist a $(K_1, F_1, Q_1, S_1)$ PDA and a $(K_2, F_2, Q_2,
S_2)$ PDA, then there exists a $(K, F, Q, S)$ PDA such that
$K=K_1K_2$, $F=F_1F_2$, $Q=F_1Q_2+F_2Q_1-Q_1Q_2$, and $S=S_1S_2$.
\end{thm}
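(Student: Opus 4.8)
The plan is to build the product array directly, indexing its rows by $[F_1]\times[F_2]$ and its columns by $[K_1]\times[K_2]$, and to place a $*$ at a cell exactly when at least one of the two corresponding parent cells carries a $*$; otherwise record the pair of parent integers as a single new symbol. Concretely, writing the two given arrays as $\textbf{P}^{(1)}=[p^{(1)}_{j_1,k_1}]_{F_1\times K_1}$ and $\textbf{P}^{(2)}=[p^{(2)}_{j_2,k_2}]_{F_2\times K_2}$, I would set the $((j_1,j_2),(k_1,k_2))$ entry of $\textbf{P}$ to $*$ if $p^{(1)}_{j_1,k_1}=*$ or $p^{(2)}_{j_2,k_2}=*$, and to the pair $(p^{(1)}_{j_1,k_1},p^{(2)}_{j_2,k_2})\in[S_1]\times[S_2]$ otherwise. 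Identifying $[S_1]\times[S_2]$ with $[S_1S_2]$ yields $S=S_1S_2$, and it then remains only to verify conditions C1--C3 of Definition~\ref{defn-pda}.

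For C1, in a fixed column $(k_1,k_2)$ a cell is a $*$ precisely when the corresponding column of $\textbf{P}^{(1)}$ or of $\textbf{P}^{(2)}$ has a $*$, so the cells that are \emph{not} $*$ are exactly those where both parents are integers, of which there are $(F_1-Q_1)(F_2-Q_2)$ by C1 for the two factors. Hence the number of $*$'s per column is $F_1F_2-(F_1-Q_1)(F_2-Q_2)=F_1Q_2+F_2Q_1-Q_1Q_2=Q$, and in particular $Q<F$ since each $F_i-Q_i\geq 1$. For C2, each product symbol $(a,b)$ does occur: by C2 for the factors, $a$ appears at some cell of $\textbf{P}^{(1)}$ and $b$ at some cell of $\textbf{P}^{(2)}$, and the product of those two cells carries $(a,b)$.

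The substantive work, and the step I expect to be the main obstacle, is C3. Suppose a symbol $(a,b)$ occupies two distinct product cells $((j_1,j_2),(k_1,k_2))$ and $((j_1',j_2'),(k_1',k_2'))$, so that $p^{(1)}_{j_1,k_1}=p^{(1)}_{j_1',k_1'}=a$ and $p^{(2)}_{j_2,k_2}=p^{(2)}_{j_2',k_2'}=b$. I would split into cases according to whether the parent positions $(j_1,k_1),(j_1',k_1')$ coincide and whether $(j_2,k_2),(j_2',k_2')$ coincide. If both pairs coincide the product cells are equal, contradicting distinctness. If both pairs differ, applying C3 to each factor gives $j_1\neq j_1'$, $k_1\neq k_1'$, $j_2\neq j_2'$, $k_2\neq k_2'$, so the product rows and columns are distinct, together with $p^{(1)}_{j_1,k_1'}=p^{(1)}_{j_1',k_1}=*$ and $p^{(2)}_{j_2,k_2'}=p^{(2)}_{j_2',k_2}=*$, which forces both transposed product cells to be $*$.

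The delicate case is the mixed one, say $(j_1,k_1)=(j_1',k_1')$ but $(j_2,k_2)\neq(j_2',k_2')$. Here C3 for $\textbf{P}^{(2)}$ gives $j_2\neq j_2'$, $k_2\neq k_2'$ and $p^{(2)}_{j_2,k_2'}=p^{(2)}_{j_2',k_2}=*$, so the product rows $(j_1,j_2),(j_1,j_2')$ and columns $(k_1,k_2),(k_1,k_2')$ are again distinct, and each transposed product cell pairs the integer $a$ with a $*$-entry of $\textbf{P}^{(2)}$ and is therefore a $*$ (the remaining mixed case is symmetric). The key observation that makes this case go through is that a single $*$ in \emph{either} factor already forces a $*$ in the product, so the $*$'s supplied by C3 of the nontrivial factor suffice even though the other factor contributes an integer. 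Since C1--C3 all hold, $\textbf{P}$ is a $(K_1K_2,\,F_1F_2,\,F_1Q_2+F_2Q_1-Q_1Q_2,\,S_1S_2)$ PDA, as claimed.
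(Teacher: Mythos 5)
Your proof is correct, and it takes a genuinely different route from the paper. The paper does not touch conditions C1--C3 directly: it invokes its own equivalence result (Theorem~\ref{thm-PDA-SNC}) to convert each factor PDA into a triple of binary matrices satisfying E1--E5, forms the entrywise (Kronecker-style) products $\textbf{C}_{X,Y}$, $\textbf{C}_{X,Z}$, $\textbf{C}_{Y,Z}$ on $X=X_1\times X_2$, $Y=Y_1\times Y_2$, $Z=Z_1\times Z_2$, verifies E1 by the same count $(|X_1|-Q_1)(|X_2|-Q_2)=|X|-Q$ that you use for C1, declares E2--E5 ``easy to check,'' and converts back. You instead build the product array itself, placing a $*$ whenever either parent cell is a $*$ and the symbol pair otherwise, and verify C1--C3 from scratch. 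The two constructions produce the same object (the paper's sufficiency proof of Theorem~\ref{thm-PDA-SNC} turns its matrices back into exactly your array), so the difference is one of packaging. What your version buys is self-containedness and explicitness: in particular your mixed case in C3 --- where the two occurrences of $(a,b)$ share the same parent cell in one factor, so only the other factor's C3 supplies the $*$'s, which nonetheless suffice because a single $*$ in either coordinate forces a $*$ in the product --- is precisely the content that the paper compresses into ``it is easy to check that \ldots\ E2--E5'' hold; in the matrix language that case disappears because uniqueness in E3--E5 factors coordinatewise. What the paper's version buys is a demonstration of its framework's convenience, at the cost of relying on the back-and-forth translation of Theorem~\ref{thm-PDA-SNC}. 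One small merit of your write-up worth keeping: you explicitly note $Q<F$ via $F_i-Q_i\geq 1$, a requirement of Definition~\ref{defn-pda} that the paper's proof never mentions.
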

\begin{proof}
By Theorem \ref{thm-PDA-SNC}, from the $(K_1, F_1, Q_1, S_1)$ PDA,
we can obtain three binary matrices $\textbf{C}_{X_1, Y_1}$,
$\textbf{C}_{X_1, Z_1}$ and $\textbf{C}_{Y_1, Z_1}$, where $X_1$
is an $F_1$-set, $Y_1$ is an $S_1$-set, and $Z_1$ is a $K_1$-set
such that conditions E1-E5 are satisfied. Similarly, from the
$(K_2, F_2, Q_2, S_2)$ PDA, we can obtain binary matrices
$\textbf{C}_{X_2, Y_2}$, $\textbf{C}_{X_2, Z_2}$ and
$\textbf{C}_{Y_2, Z_2}$, where $X_2$ is an $F_2$-set, $Y_2$ is an
$S_2$-set, and $Z_2$ is a $K_2$-set such that conditions E1-E5 are
satisfied. Let $X=X_1\times X_2$, $Y=Y_1\times Y_2$ and
$Z=Z_1\times Z_2$. Then $|X|=F_1F_2$, $Y=S_1S_2$, and $Z=K_1K_2$.
Further, construct binary matrices $\textbf{C}_{X, Y}$,
$\textbf{C}_{X, Z}$ and $\textbf{C}_{Y, Z}$ as follows. For every
$x=(x_1,x_2)\in X$ and $y=(y_1,y_2)\in Y$, where $x_1\in X_1$,
$x_2\in X_2$, $y_1\in Y_1$ and $y_2\in Y_2$:
\begin{equation*}
\textbf{C}_{X,Y}(x,y)=\left\{\begin{aligned} &1&
\!\text{if}~\textbf{C}_{X_1,Y_1}(x_1,y_1)\!
=\!\textbf{C}_{X_2,Y_2}(x_2,y_2)\!=\!1, \\
&0& \text{otherwise.} ~ ~ ~ ~ ~ ~ ~ ~ ~ ~ ~ ~ ~ ~ ~ ~
~ ~ ~ ~ ~ ~ ~ ~ ~ ~ ~ ~ ~ ~ ~ ~ ~ ~\\
\end{aligned} \right.
\end{equation*}
Moreover, for every $z=(z_1,z_2)\in Z$, where $z_1\in Z_1$ and
$z_2\in Z_2$:
\begin{equation*}
\textbf{C}_{X,Z}(x,z)=\left\{\begin{aligned} &1&
\!\text{if}~\textbf{C}_{X_1,Z_1}(x_1,z_1)\!
=\!\textbf{C}_{X_2,Z_2}(x_2,z_2)\!=\!1, \\
&0& \text{otherwise.} ~ ~ ~ ~ ~ ~ ~ ~ ~ ~ ~ ~ ~ ~ ~ ~ ~ ~ ~ ~ ~ ~
~ ~ ~ ~ ~ ~ ~ ~ ~ ~ ~ ~\\\end{aligned} \right.
\end{equation*}
and
\begin{equation*}
\textbf{C}_{Y,Z}(y,z)=\left\{\begin{aligned} &1&
\!\text{if}~\textbf{C}_{Y_1,Z_1}(y_1,z_1)\!
=\!\textbf{C}_{Y_2,Z_2}(y_2,z_2)\!=\!1, \\
&0& \text{otherwise.} ~ ~ ~ ~ ~ ~ ~ ~ ~ ~ ~ ~ ~ ~ ~ ~ ~ ~ ~ ~ ~ ~
~ ~ ~ ~ ~ ~ ~ ~ ~ ~ ~\\\end{aligned} \right.
\end{equation*}

Since $\textbf{C}_{X_1, Y_1}$, $\textbf{C}_{X_1, Z_1}$ and
$\textbf{C}_{Y_1, Z_1}$ satisfy conditions E1, for each $z_1\in
Z_1$, we have $$\left|\{x_1\in X_1: \textbf{C}_{X_1,
Z_1}(x_1,z_1)=1\}\right|=|X_1|-Q_1.$$ Similarly, for each $z_2\in
Z_2$, we have $$\left|\{x_2\in X_2: \textbf{C}_{X_2,
Z_2}(x_2,z_2)=1\}\right|=|X_2|-Q_2.$$ Then by the construction,
for each $z=(z_1,z_2)\in Z$, \begin{align*}&\left|\{x\in X:
\textbf{C}_{X, Z}(x,z)=1\}\right|\\&=(|X_1|-Q_1)(|X_2|-Q_2)\\
&=|X_1||X_2|-(|X_1Q_2+|X_2|Q_1-Q_1Q_2)\\
&=|X|-(F_1Q_2+F_2Q_1-Q_1Q_1)\\&=|X|-Q.\end{align*} So the binary
matrices $\textbf{C}_{X, Y}$, $\textbf{C}_{X, Z}$ and
$\textbf{C}_{Y, Z}$ satisfy condition E1. Moreover, it is easy to
check that $\textbf{C}_{X, Y}$, $\textbf{C}_{X, Z}$ and
$\textbf{C}_{Y, Z}$ satisfy conditions E2-E5. By Theorem
\ref{thm-PDA-SNC}, there exists a $(K, F, Q, S)$ PDA, where
$$K=|Z|=K_1K_2,$$ $$F=|X|=F_1F_2,$$ $$Q=F_1Q_2+F_2Q_1-Q_1Q_2,$$ and
$$S=|Y|=S_1S_2,$$ which completes the proof.
\end{proof}

We call the resulted $(K, F, Q, S)$ PDA in Theorem
\ref{CS-Drt-Prd} the \emph{direct product} of the original $(K_1,
F_1, Q_1, S_1)$ PDA and $(K_2, F_2, Q_2, S_2)$ PDA. Clearly, the
direct product construction can be applied to more than two PDAs.

The following Corollary is just a direct consequence of Lemma
\ref{lem-pda-ccp} and Theorem \ref{CS-Drt-Prd}.
\begin{cor}\label{cor-Drt-Prd}
Suppose there exist a $(K_1, F_1, Q_1, S_1)$ PDA and a $(K_2, F_2,
Q_2, S_2)$ PDA. Then there exists a caching scheme for any
$(K,M,N)$ caching system with $K=K_1K_2$,
$\frac{M}{N}=\frac{Q_1}{F_1}+\frac{Q_2}{F_2}
-\frac{Q_1}{F_1}\frac{Q_2}{F_2}$, the delivery rate
$R=\frac{S_1}{F_1}\frac{S_2}{F_2}$, and the file size $F=F_1F_2$.
\end{cor}

\begin{rem}
Another method to construct new PDA from old ones was considered
in \cite{Shanmugam16} and \cite{Cheng19-1}, where multiple copies,
say $m$ copies, of a $(K,F,Q,S)$ PDA is concatenated to form a new
$(mK,F,Q,mS)$ PDA. This concatenation method can reduce the
subpacketization level, but the delivery rate increases to the $m$
times of the original rate.
\end{rem}



\section{Conclusions and Discussions}

We present a new perspective of the PDA design problem based on
three mutually related binary matrices. From this new perspective,
the process of PDA design can be conveniently described. We
proposed some new constructions of PDAs and the corresponding
coded caching schemes with low subpacketization level based on
projective geometries over finite fields, combinatorial
configurations, and $t$-designs, which also generalize some known
results in the literature. We also give a direct product method
for constructing new coded caching scheme from existing schemes.
The results of this paper enrich the constructions of coded
caching schemes of low subpacketization level.

\appendices

\section{Proof of Theorem \ref{thm-PDA-SNC}}

\begin{proof}[Proof of Necessity]
Suppose that $\textbf{P}\!=\![p_{j,k}]_{F\times K}$ is a $(K\!,
F\!, Q, S)$ PDA. Let $X=[F]$, $Y=[S]$, and $Z=[K]$. We will
construct three binary matrices $\textbf{C}_{X, Y}$,
$\textbf{C}_{X, Z}$ and $\textbf{C}_{Y, Z}$ that satisfy
conditions E1$-$E5 of Theorem \ref{thm-PDA-SNC}.

The matrix $\textbf{C}_{X, Z}$ can be constructed as follows: For
every $x\in X$ and $z\in Z$,
\begin{equation}\label{eq1-PDA-C2E}
\textbf{C}_{X, Z}(x,z)=\left\{\begin{aligned}
&1& ~ \text{if}~ p_{x,z}\in[S], \\
&0& ~ \text{if}~ p_{x,z}=*.~~ \\
\end{aligned} \right.
\end{equation}
For every $y\in Y=[S]$, denote
$$\Omega_y=\{(x,z)\!\in\! X\times  Z\!:
p_{x,z}\!=\!y\}\!=\!\{(x_1,z_1), \cdots\!,
(x_{\ell_y},z_{\ell_y})\}.$$ Clearly,
$\Omega_y\cap\Omega_{y'}=\emptyset$ for any distinct $y, y'\in Y$,
and by condition C2, $\Omega_y\neq\emptyset$. Moreover, by
condition C3, $x_1, \cdots\!, x_{\ell_y}$ are distinct, and $z_1,
\cdots, z_{\ell_y}$ are distinct. So
$$A_y\!:=\!\{x_1, \cdots\!, x_{\ell_y}\}\neq\emptyset$$
and $$B_y\!:=\{z_1, \cdots, z_{\ell_y}\}\neq\emptyset.$$ Then the
matrix $\textbf{C}_{X, Y}$ can be constructed as: For every $x\in
X$ and $y\in Y$,
\begin{equation}\label{eq2-PDA-C2E}
\textbf{C}_{X, Y}(x,y)=\left\{\begin{aligned}
&1& ~ \text{if}~ x\in A_y, \\
&0& ~ \text{otherwise}.\\
\end{aligned} \right.
\end{equation}
Further, $\textbf{C}_{Y, Z}$ can be constructed as: For every
$y\in Y$ and $z\in Z$,
\begin{equation}\label{eq3-PDA-C2E}
\textbf{C}_{Y, Z}(y,z)=\left\{\begin{aligned}
&1& ~ \text{if}~ z\in B_y, \\
&0& ~ \text{otherwise}.\\
\end{aligned} \right.
\end{equation}
It remains to prove that $\textbf{C}_{X, Y}$, $\textbf{C}_{X, Z}$
and $\textbf{C}_{Y, Z}$ satisfy conditions E1$-$E5 of Theorem
\ref{thm-PDA-SNC}.

Since $\textbf{P}$ is a $(K\!, F\!, Q, S)$ PDA, by condition C1
and by \eqref{eq1-PDA-C2E}, we have $$|\{x\in X\!: \textbf{C}_{X,
Z}(x,z)=1\}|=F-Q$$ for every $z\in Z$. So $\textbf{C}_{X, Y}$,
$\textbf{C}_{X, Z}$ and $\textbf{C}_{Y, Z}$ satisfy condition E1.

For every $y\in Y$, by \eqref{eq3-PDA-C2E}, we have $$\left\{z\in
Z:\textbf{C}_{Y, Z}(y,z)=1\right\}=B_y\neq\emptyset.$$ So
$\textbf{C}_{X, Y}$, $\textbf{C}_{X, Z}$ and $\textbf{C}_{Y, Z}$
satisfy condition E2.

For any given $(x,y)\in X\times Y$ such that $\textbf{C}_{X,
Y}(x,y)=1$, by \eqref{eq2-PDA-C2E}, we have $x=x_i$ for some
$x_i\in A_y$. Then by \eqref{eq1-PDA-C2E} and \eqref{eq3-PDA-C2E},
we have $\textbf{C}_{X, Z}(x_i,z_i)=1$ and $\textbf{C}_{Y,
Z}(y,z_i)=1$, where $z_i\in Z$ satisfying $(x_i,z_i)\in\Omega_y$.
Moreover, by condition C3, we have $\textbf{C}_{X,
Z}(x_i,z_{i'})=0$ for any $z_i\neq z_{i'}\in B_y$. So $z_i$ is the
unique point in $Z$ that satisfies $\textbf{C}_{X,
Z}(x_i,z)=\textbf{C}_{Y, Z}(y,z)=1$, which implies that
$\textbf{C}_{X, Y}$, $\textbf{C}_{X, Z}$ and $\textbf{C}_{Y, Z}$
satisfy condition E3. By the similar discussions, we can prove
that $\textbf{C}_{X, Y}$, $\textbf{C}_{X, Z}$ and $\textbf{C}_{Y,
Z}$ satisfy condition E5.

For any given $(x,z)\in X\times Z$ such that $\textbf{C}_{X,
Z}(x,z)=1$, by \eqref{eq1-PDA-C2E}, we have $(x,z)\in\Omega_y$ for
some $y\in Y$, so by \eqref{eq2-PDA-C2E} and \eqref{eq3-PDA-C2E},
we have $\textbf{C}_{X, Y}(x,y)=1$ and $\textbf{C}_{Y, Z}(y,z)=1$.
Now, suppose $y\neq y'\in Y$ such that $\textbf{C}_{X,
Y}(x,y')=\textbf{C}_{Y, Z}(y',z)=1$. Then by \eqref{eq2-PDA-C2E}
and \eqref{eq3-PDA-C2E}, we have $x\in A_{y'}$ and $z\in B_{y'}$.
Moreover, noticing that $\textbf{C}_{X, Z}(x,z)=1$, by condition
C3, we have $(x,z)\in\Omega_{y'}$, which contradicts to the fact
that $\Omega_y\cap\Omega_{y'}=\emptyset$ for any distinct $y',
y\in Y$. Hence, we proved that there is a unique $y\in Y$ that
satisfies $\textbf{C}_{X, Y}(x,y)=\textbf{C}_{Y, Z}(y,z)=1$. So
$\textbf{C}_{X, Y}$, $\textbf{C}_{X, Z}$ and $\textbf{C}_{Y, Z}$
satisfy condition E4.

By the above discussions, $\textbf{C}_{X, Y}$, $\textbf{C}_{X, Z}$
and $\textbf{C}_{Y, Z}$ satisfy conditions E1$-$E5 of Theorem
\ref{thm-PDA-SNC}, which completes the proof.
\end{proof}

\begin{proof}[Proof of Sufficiency]
Since $ Y$ is an $S$-set, without loss of generality, we can
assume that $Y=[S]$.

Suppose $\textbf{C}_{X, Y}$, $\textbf{C}_{X, Z}$ and
$\textbf{C}_{Y, Z}$ are binary matrices satisfying conditions
E1$-$E5 of Theorem \ref{thm-PDA-SNC}. Let $\textbf{P}_{X,
Z}=[p_{x,z}]_{X, Z}$ be an $F\times K$ array satisfying:
$p_{x,z}=*$ if $\textbf{C}_{ X, Z}(x,z)=0$, and $p_{x,z}=y$ if
$\textbf{C}_{ X, Z}(x,z)=\textbf{C}_{X, Y}(x,y)=\textbf{C}_{Y,
Z}(y,z)=1$. The construction is reasonable because $\textbf{C}_{X,
Y}$, $\textbf{C}_{X, Z}$ and $\textbf{C}_{Y, Z}$ satisfy condition
E4. We need to prove that $\textbf{P}_{X, Z}$ is a $(K,F,Q,S)$
PDA.

From the assumption that $\textbf{C}_{X, Y}$, $\textbf{C}_{X, Z}$
and $\textbf{C}_{Y, Z}$ satisfy conditions E1 and E2, we can
easily see that $\textbf{P}_{X, Z}$ satisfies conditions C1 and C2
of Definition \ref{defn-pda}. It remains to prove that
$\textbf{P}_{X, Z}$ satisfies condition C3.

Suppose $(x,z),(x',z')\in X\times Z$ such that $(x,z)\neq(x',z')$
and $p_{x,z}=p_{x',z'}=y$. Then by the construction of
$\textbf{P}_{X, Z}$, we have
\begin{align}\label{eq1-thm-PDA-SNC}\textbf{C}_{X, Y}(x,y)
=\textbf{C}_{X, Z}(x,z)=\textbf{C}_{Y, Z}(y,z)=1\end{align} and
\begin{align}\label{eq2-thm-PDA-SNC}
\textbf{C}_{X, Y}(x',y)=\textbf{C}_{X, Z}(x',z')=\textbf{C}_{Y,
Z}(y,z')=1.\end{align} The following discussions prove that
$\textbf{P}_{X, Z}$ satisfies condition C3 of Definition
\ref{defn-pda}.
\begin{itemize}
 \item Suppose $x'=x$. Since $(x,z)\neq(x',z')$, then
 $z'\neq z$. Moreover, by \eqref{eq1-thm-PDA-SNC} and
 \eqref{eq2-thm-PDA-SNC}, we have $$\textbf{C}_{X, Y}(x,y)=
 \textbf{C}_{X, Z}(x,z)=\textbf{C}_{Y, Z}(y,z)=1$$ and
 $$\textbf{C}_{X, Y}(x,y)=\textbf{C}_{X, Z}(x,z')
 =\textbf{C}_{Y, Z}(y,z')=1,$$ which
 contradicts to condition E3. So it must be the case that $x\neq
 x'$. Similarly, we can prove that $z\neq z'$ by contradiction to E5.
 \item Suppose $p_{x,z'}=y'$ for some $y'\in Y$. Then by
 construction of $\textbf{P}_{X, Z}$, we have
 \begin{align}\label{eq3-thm-PDA-SNC} \textbf{C}_{X,
 Y}(x,y')=\textbf{C}_{X, Z}(x,z')=\textbf{C}_{Y,
 Z}(y',z')=1.\end{align} Combining
 \eqref{eq1-thm-PDA-SNC}$-$\eqref{eq3-thm-PDA-SNC}, we have
 $$\textbf{C}_{X,
 Y}(x,y)=\textbf{C}_{X, Z}(x,z)=\textbf{C}_{Y, Z}(y,z)=1$$ and
 $$\textbf{C}_{X,
 Y}(x,y)=\textbf{C}_{X, Z}(x,z')=\textbf{C}_{Y, Z}(y,z')=1,$$ which
 contradicts to condition E3. So it must be the case that
 $p_{x,z'}=*$. Similarly, we can prove that $p_{x',z}=*$ by
 contradiction to E5.
\end{itemize}

So we proved that $\textbf{P}_{X, Z}$ satisfies condition C1$-$C3
of Definition \ref{defn-pda}, and hence is a $(K,F,Q,S)$ PDA,
which completes the proof.
\end{proof}

Another way to prove Theorem \ref{thm-PDA-SNC} is to prove firstly
that there exist three binary matrices $\textbf{C}_{ X, Y}$,
$\textbf{C}_{ X, Z}$, $\textbf{C}_{ Y, Z}$ satisfy conditions
E1$-$E5 if and only if there exists a linear, $(6,3)$-free,
$3$-uniform, $3$-partite hypergraph with three vertex parts $F$,
$K$, $S$ such that $|F| = F$, $|K| = K$, $|S| = S$, and each
vertex $k\in K$ is incident with exactly $F-Q$ edges. Then Theorem
\ref{thm-PDA-SNC} can be obtained from \cite[Theorem 10]{Chong18}.

\end{document}